\newtheorem{theorem}{\textbf{Theorem}}
\newtheorem{lemma}{\textbf{Lemma}}
\newtheorem{remark}{\textbf{Remark}}
\newtheorem{assumption}{\textbf{Assumption}}
\newtheorem{definition}{\textbf{Definition}}
\newcommand{\R}{\mathbb{R}}
\newcommand{\diag}{\operatorname{diag}}
\title{ Delay-Independent Safe Control with Neural Networks: Positive Lur’e Certificates for Risk-Aware Autonomy}
\author{Hamidreza Montazeri$^{1}$, Milad Siami$^{1}$\thanks{$^1$ Department of Electrical \& Computer Engineering, Northeastern University, Boston, MA 02115, USA.
(e-mails: {\tt\footnotesize	\{montazerihedesh.h, m.siami\}@northeastern.edu}).}}
\begin{document}
\maketitle
\begin{abstract}
We present a risk-aware safety certification method for autonomous, learning-enabled control systems. Focusing on two realistic risks, state/input delays and interval-matrix uncertainty, we model the neural network (NN) controller with local sector bounds and exploit positivity structure to derive linear, delay-independent certificates that guarantee local exponential stability across admissible uncertainties. To benchmark performance, we adopt and implement a state-of-the-art IQC NN-verification pipeline. On representative cases, our positivity-based tests run orders of magnitude faster than SDP-based IQC while certifying regimes the latter cannot—providing scalable safety guarantees that complement risk-aware control.
\end{abstract}

\section{Introduction}
Autonomous systems increasingly rely on learning-enabled components \cite{du2023can}, yet verifying closed-loop safety under realistic risks (e.g., time delays and parametric uncertainty) remains a bottleneck in deploying AI-enabled control in safety-critical applications, including autonomous vehicles, aerospace systems, and medical devices \cite{sznaier2022role}.


Recent advances in NN verification have produced various approaches for certifying closed-loop stability. Semidefinite programming (SDP) based methods \cite{fazlyab2020safety}, barrier function learning methods \cite{qin2021learning}, among others, are presented in a survey \cite{dawson2023safe}.  
On a notable instance, Fazlyab et al. \cite{fazlyab2020safety} developed a comprehensive framework using quadratic constraints for safety verification, while Yin et al. \cite{yin2021stability} extended these techniques specifically for systems with NN controllers. However, these SDP-based approaches suffer from computational complexity that scales poorly with system dimension and network size \cite{fazlyab2020safety}.
Alternative verification methods have emerged from the machine learning community, including reachability analysis \cite{xiang2018output}, abstract interpretation \cite{singh2019abstract}, and convex relaxation techniques such as CROWN \cite{zhang2018efficient}. While these methods excel at providing input-output bounds for NNs, they typically focus on open-loop properties and struggle to incorporate closed-loop dynamics, particularly under realistic operational conditions like delays and uncertainties \cite{everett2021efficient}.


Time delays are ubiquitous in networked control systems, arising from communication latencies and computational delays \cite{richard2003time}. The stability analysis of delayed systems has a rich history in control theory \cite{gu2004stability}. However, when combined with NN controllers, the analysis becomes significantly more complex. Recent works such as \cite{sun2021adaptive,10834551} addressed NN control of time-delay systems but relied on computationally expensive LMIs and complex control schemes.
Parametric uncertainty presents another critical challenge in real-world deployments. The combination of delays and uncertainties has been studied extensively for linear systems \cite{7353149,shukla2025predefined}, but the introduction of NN controllers creates a three-way interaction that existing methods struggle to address.

Positive systems theory offers unique analytical advantages through its structural properties \cite{farina2011positive}. Metzler matrices provide computationally efficient certificates that are often delay-independent \cite{shafai2014positive}.
The connection between positive systems and sector-bounded nonlinearities through Lur'e systems has been established \cite{bill2016stability}, but its application to NN verification remains underexplored \cite{Hamidrezaglobalsectorbound}.


This paper bridges the gap between positive systems theory and NN verification by developing a novel framework that exploits positivity constraints for efficient safety certification. Our key contributions are:

\textbf{1) Local Sector Bounds for FFNNs}: We develop a novel sector bound for feedforward NNs that, unlike existing elementwise bounds (CROWN, IBP), provides network-level sector descriptions suitable for Lur'e system analysis.

\textbf{2) Positivity-Based Verification Framework}: We introduce a scalable, delay-independent certificate that leverages Metzler matrix and positive system properties to verify NN feedback loops under both time delays and interval uncertainties.

\textbf{3) Comprehensive Risk Analysis}: We provide unified treatment of three risk configurations: (i) interval uncertainty only, (ii) time delays only, and (iii) combined delays and uncertainties, demonstrating that our approach handles all cases with a single, computationally efficient framework.

\textbf{4) Comparative Evaluation}: We develop and adapt the state-of-the-art IQC-based verification method for NN feedback loops to our specific risk setting and use it as a baseline. We benchmark our method against this IQC baseline, demonstrating orders-of-magnitude speedups and certification in regimes where SDP-based approaches fail.

The significance of this work extends beyond theoretical contributions. As autonomous systems increasingly operate in uncertain, networked environments, the ability to efficiently certify safety under realistic operational conditions becomes paramount. Our positivity-based approach offers a path toward real-time, online verification that could enable adaptive safety certificates and risk-aware control strategies.


\subsection{Notation}
We denote by $\R^{n}_{+}$ and $\R^{n\times m}_{+}$ vectors and matrices with nonnegative real entries. The orders $ \ge ,\:>$ denote the elementwise inequality for vectors and matrices of the same size, while $\prec,\preceq$ is used in the definiteness sense. A square matrix is Metzler if its off-diagonal entries are nonnegative.

\section{Problem Statement and Motivation}
\subsection{Problem Description}\label{sec:problem}
We study the stability of an autonomus AI-enabled system under two sources of risk, namely, delay and interval uncertainty. We model the system in the form of a classical Lur'e system where a static NN is in the feedback loop of a linear time invariant (LTI) plant subjected to interval uncertainty and state and input delay written in the form:
\begin{align}\label{eq:wholesys}\small
    \dot x(t) &= [\underline{A}_0,\overline{A}_0]\,x(t)
    + \sum_{i=1}^{\ell} [\underline{A}_i,\overline{A}_i]\,x(t-\tau_i)
    + B_i u(t-\tau_i),\nonumber\\
    y(t) &= Cx(t), \qquad
    u(t) = \Phi\big(y(t)\big).
\end{align}
Here, \(A_0, A_i \in \mathbb{R}^{n\times n}\) are constant and lie in the elementwise intervals \([\underline{A}_0,\overline{A}_0]\) and \([\underline{A}_i,\overline{A}_i]\); \(B_i \in \mathbb{R}^{n\times m}\) and \(C \in \mathbb{R}^{p\times n}\) are the input and output matrices; and \(x(t) \in \mathbb{R}^n\), \(u(t) \in \mathbb{R}^m\), \(y(t) \in \mathbb{R}^p\) denote the state, input, and output, with initial state \(x(0)\coloneqq x_0\). Assume an initial history \(x_{h_i}(t)\) on \(t\in [-\tau_{i},0]\) so all delayed terms are well-defined.
The controller is a static nonlinear mapping $\Phi:\mathbb{R}^{p}\to\mathbb{R}^m$ 
with $\Phi(0)=0$ rendering the origin $(x,u)=(0,0)$ to be equilibrium.
The function $\Phi$ is realized by a $q$-layer FFNN with the following equations:
\vspace{-.2cm}

{\small \begin{align}\label{eq:ffnn}
&\nu^{(1)} = W^{(1)} y + b^{(1)}, 
\qquad \omega^{(1)} = \varphi^{(1)}\!\big(\nu^{(1)}\big),\nonumber\\
&\nu^{(i)} = W^{(i)} \omega^{(i-1)} + b^{(i)}, 
\quad \omega^{(i)} = \varphi^{(i)}\!\big(\nu^{(i)}\big), \quad i=2,\dots,q,\nonumber\\
&u = W^{(q+1)} \omega^{(q)} + b^{(q+1)}.
\end{align}}

Dimensions are $W^{(i)}\!\in\mathbb{R}^{n_i\times n_{i-1}}$ and $b^{(i)}\in\R^{n_i}$, with the dimensions of the first and last layers being fixed to the dimensions of system's input and output. Each activation $\varphi^{(i)}(\cdot)$ acts elementwise and is assumed to be monotone on a prescribed interval. Most widely used activation functions satisfy the requirement.
We analyze the equilibrium at the origin by enforcing $\Phi(0)=0$ either by design, $b^{(i)}=0$, or by shifting coordinates around the operating point.
The set \((x^*,u^*,\nu^*,\omega^*)\) denotes the equilibrium states.

\subsection{Challenges and Why Existing Methods Fall Short}
The closed loop system in \eqref{eq:wholesys} is a complex nonlinear system with three coupled sources of uncertainty/nonlinearity, namely time delay, interval matrices, and an NN controller, making safety verification challenging. Standard nonlinear analysis reduces the problem to large SDPs to certify delay/uncertainty margins. Inserting an NN further inflates these LMIs with auxiliary variables and multipliers. Feasibility is also highly sensitive to the specific delay and uncertainty values, so the SDP must be re-solved whenever these parameters change. As a result, SDP-based approaches become computationally intractable at scale \cite{fazlyab2020safety}. Even LMI-free analytic results for linear systems 
are delay-dependent and do not extend gracefully to additional uncertainties or nonlinear (NN) feedback. We later illustrate these limitations on representative examples and explanatory notes.

In response, we develop a positivity-based verification framework for \eqref{eq:wholesys}. We wrap the NN in a tailored sector bound that enforces positivity of the feedback interconnection, and then leverage tools from positive-systems theory. The outcome is a delay-independent, highly scalable method for safety verification of NN feedback loops.

\section{Preliminaries and Background}
\subsection{Positive Systems}
Consider an LTI system of the form:
\begin{equation}\label{eq:generallti}\small
    \dot x(t) = A x(t) + B u(t), \quad
    y(t) = C x(t),
\end{equation}

\begin{definition}\label{def:positive system}
    The system in \eqref{eq:generallti} is called ``internally positive'' if, \( \forall t > 0 \), we have \( x(t) \geq 0 \), given \( x_0 \geq 0 \) and \( u(t) \geq 0 \).
\end{definition}
\begin{lemma}\label{lem:lemav}
    The system in \eqref{eq:generallti} is internally positive if and only if \( A \) is a Metzler matrix, $B \in \mathbb R_+^{n\times m}$, and $C \in \mathbb R_+^{p\times n}$.
    If the internally positive system is also asymptotically stable, then there exists a vector \( v \in \mathbb{R}_+^n \), with \( v > 0 \), such that \( v^\top A < 0 \).
\end{lemma}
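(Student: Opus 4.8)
The plan is to treat the two assertions separately, beginning with the characterization of internal positivity and then using it to construct the Metzler--Hurwitz certificate.

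For the equivalence in the first claim, I would prove sufficiency by exploiting nonnegativity of the transition matrix. Since $A$ is Metzler, choosing $\alpha>0$ large enough makes $A+\alpha I$ elementwise nonnegative, so $e^{(A+\alpha I)t}=\sum_{k\ge 0}\frac{((A+\alpha I)t)^k}{k!}\ge 0$ for $t\ge 0$; writing $e^{At}=e^{-\alpha t}e^{(A+\alpha I)t}$ then yields $e^{At}\ge 0$. The variation-of-constants formula $x(t)=e^{At}x_0+\int_0^t e^{A(t-s)}Bu(s)\,ds$ now shows that $x_0\ge 0$, $B\ge 0$, and $u(\cdot)\ge 0$ force $x(t)\ge 0$, while $y=Cx\ge 0$ follows from $C\ge 0$. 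For necessity I would probe the system with canonical nonnegative data. Taking $u\equiv 0$ and $x_0=e_j$ gives $(e^{At})_{ij}\ge 0$; since $e^{At}=I+At+O(t^2)$, dividing the off-diagonal entries by $t$ and letting $t\to 0^+$ gives $A_{ij}=\lim_{t\to 0^+}(e^{At})_{ij}/t\ge 0$ for $i\ne j$, i.e. $A$ is Metzler. Similarly, $x_0=0$ with a constant input $u\equiv e_k$ gives $x(t)/t\to Be_k\ge 0$, so $B\ge 0$, and requiring $y=Cx\ge 0$ on nonnegative states forces $C\ge 0$.

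For the second claim I would produce the vector $v$ explicitly rather than invoking Perron--Frobenius, since asymptotic stability alone does not guarantee irreducibility and hence need not yield a strictly positive eigenvector. Because $A$ is Hurwitz it is invertible and $-A^{-1}=\int_0^\infty e^{At}\,dt$; combining this with $e^{At}\ge 0$ from the first part shows $-A^{-1}\ge 0$. I would then set $v:=(-A^{-1})^\top \mathbf{1}$, equivalently the unique solution of $A^\top v=-\mathbf{1}$, so that $v\ge 0$ while, by construction, $v^\top A=-\mathbf{1}^\top<0$ elementwise.

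The main obstacle is upgrading $v\ge 0$ to the required strict inequality $v>0$. Here I would use nonsingularity: $v_i=\sum_j(-A^{-1})_{ji}$ is the $i$-th column sum of the nonnegative matrix $-A^{-1}$, and were $v_i=0$ that entire column would vanish, contradicting invertibility of $A^{-1}$. Hence every $v_i>0$, completing the certificate. A secondary point worth stating carefully is the interchange of limit and matrix exponential in the necessity argument, justified by uniform convergence of the exponential series on compact time intervals, together with the convergence of the integral $\int_0^\infty e^{At}\,dt$, which is guaranteed by the exponential decay of $e^{At}$ under the Hurwitz hypothesis.
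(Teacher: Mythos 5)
Your proof is correct. Note first that the paper does not actually prove this lemma: it is stated as background and attributed to standard references on positive systems (Farina--Rinaldi and the cited Shafai et al.\ lemma), so there is no in-paper argument to compare against. Your two-part argument is the standard textbook one and is sound: the sufficiency via $e^{At}=e^{-\alpha t}e^{(A+\alpha I)t}\ge 0$ and variation of constants, the necessity via probing with $x_0=e_j$, $u\equiv e_k$ and first-order expansions of $e^{At}$, and the certificate $v=(-A^{-1})^{\top}\mathbf{1}$ using $-A^{-1}=\int_0^{\infty}e^{At}\,dt\ge 0$, with strict positivity of $v$ obtained from the observation that a zero column sum of the nonnegative matrix $-A^{-1}$ would force a zero column and contradict invertibility. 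This is arguably preferable to the common Perron--Frobenius route, exactly for the reason you give: without irreducibility one does not directly get a strictly positive left eigenvector, whereas your construction needs only the Hurwitz property. One small caveat worth flagging: the paper's Definition~\ref{def:positive system} literally requires only $x(t)\ge 0$ and says nothing about the output, under which reading the condition $C\ge 0$ would not be necessary; your necessity argument for $C$ implicitly uses the standard definition of internal positivity in which $y(t)\ge 0$ is also required. That is a defect of the paper's definition rather than of your proof, but you should state explicitly that you are using the output-inclusive definition so the ``only if'' direction for $C$ is justified.
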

Consider the delayed linear system
\begin{equation}\label{eq:generaldelaylti}\small
    \dot x(t) = A_0 x(t) + \sum_{i=1}^l A_ix(t-\tau_i)+ B_i u(t-\tau_i), \quad
    y(t) = C x(t),
\end{equation}
\begin{lemma}\label{lem:delayedpositivesys}
The system \eqref{eq:generaldelaylti} is internally positive if and only if $A_0$ is a Metzler matrix and $A_i \in \mathbb{R}_{+}^{n \times n} ; B_i \in \mathbb{R}_{+}^{n \times m}, C \in \mathbb{R}_{+}^{p \times n}$ are nonnegative matrices.
\end{lemma}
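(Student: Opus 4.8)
The plan is to prove both implications by exploiting the sign structure of the right-hand side, paralleling the proof of Lemma~\ref{lem:lemav} but handling the delayed terms via the method of steps. Throughout, recall that internal positivity requires $x(t)\ge 0$ (hence $y=Cx\ge 0$) for every nonnegative initial state $x_0\ge 0$, nonnegative history on each $[-\tau_i,0]$, and nonnegative input $u(\cdot)\ge 0$.

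\emph{Sufficiency} ($\Leftarrow$). Assume $A_0$ Metzler and $A_i,B_i,C$ nonnegative. First I would split $A_0=M-\alpha I$ with $\alpha>0$ large enough that $M:=A_0+\alpha I\ge 0$; this is possible precisely because $A_0$ is Metzler. On the first interval $[0,\tau_{\min}]$, $\tau_{\min}=\min_i\tau_i$, the delayed signals $x(t-\tau_i)$ are read off from the nonnegative history and $u$ is nonnegative, so the forcing $g(t):=\sum_i A_i x(t-\tau_i)+\sum_i B_i u(t-\tau_i)$ is a \emph{known} nonnegative function. Multiplying by $e^{\alpha t}$ and integrating yields the Volterra representation $x(t)=e^{-\alpha t}x_0+\int_0^t e^{-\alpha(t-s)}\big(Mx(s)+g(s)\big)\,ds$, whose kernel $e^{-\alpha(t-s)}M$ and data $x_0,g$ are all entrywise nonnegative. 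A Picard/monotone iteration then shows $x(t)\ge 0$ on $[0,\tau_{\min}]$, since each iterate preserves nonnegativity and the iteration converges to the unique solution. Advancing interval by interval (the just-computed nonnegative trajectory becomes the history for the next step) and invoking induction gives $x(t)\ge 0$ for all $t\ge 0$, whence $y=Cx\ge 0$. Equivalently, one can run a ``first-touch'' argument: at the earliest time $t^\star$ a component $x_k$ would reach zero, the Metzler/nonnegativity structure forces $\dot x_k(t^\star)\ge 0$, preventing a sign change.

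\emph{Necessity} ($\Rightarrow$). Assume internal positivity. I would isolate each matrix by choosing initial data that activates only one term near $t=0^+$. Taking $u\equiv 0$ and a history vanishing on a neighborhood of every $-\tau_i$ while $x_0=e_j$ (a ramp from $0$ to $e_j$ on a small interval renders this continuous), all delayed contributions vanish for small $t>0$, so the dynamics there reduce to the delay-free positive system $\dot x=A_0x$; the necessity direction of Lemma~\ref{lem:lemav} then forces $A_0$ to be Metzler and $C\ge 0$. To pin down $A_i\ge 0$, I would set $x_0=0$, $u\equiv 0$, and pick a history equal to $e_j$ on a neighborhood of $-\tau_i$ and $0$ near the other $-\tau_m$ (possible for distinct delays); then $\dot x_k(0^+)=(A_i)_{kj}$, and since $x_k(0)=0$ must not become negative we need $(A_i)_{kj}\ge 0$. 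An identical construction with $x_0=0$, zero state history, and input $u=e_j$ supported near $-\tau_i$ gives $\dot x_k(0^+)=(B_i)_{kj}\ge 0$, hence $B_i\ge 0$. Ranging over $j,k$ establishes all the claimed sign patterns.

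The main obstacle is making the delayed-term bookkeeping rigorous rather than the sign algebra, which is routine. On the sufficiency side the subtlety is that $\dot x_k(t^\star)\ge 0$ is only nonstrict at a boundary touch, so the clean argument needs either the Volterra/method-of-steps formulation above or an $\varepsilon\mathbf 1$-perturbation-plus-limit device (strictly positive data and added forcing $\varepsilon\mathbf 1$ give $\dot x_k(t^\star)\ge\varepsilon>0$, a genuine contradiction, and continuous dependence recovers the nonstrict claim as $\varepsilon\to0$) to rule out a tangential crossing into negativity. On the necessity side the care is in choosing admissible histories that excite a single delay channel at a time; for distinct delays a bump-type history isolates $A_i$, whereas coincident delays only identify the aggregate $\sum_{\tau_m=\tau_i}A_m$, so I would assume distinct delays without loss of generality or state the conclusion for the lumped coefficients.
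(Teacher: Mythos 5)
Your proof is correct, but note that the paper itself does not prove this lemma: it is stated as a known result and attributed to the positive-systems literature (e.g., \cite{farina2011positive} and \cite[Lem.~1]{shafai2014positive}), so there is no in-paper argument to compare against line by line. Your argument is the standard one and is sound on both sides: the method-of-steps/Volterra formulation with the splitting $A_0=M-\alpha I$, $M\ge 0$, cleanly handles the nonstrict boundary-touch issue in the sufficiency direction, and the channel-isolating histories give necessity. The first-touch variant you mention as an alternative is in fact exactly the technique the paper deploys later when it proves the nonlinear generalization, Lemma~\ref{lem:lurepositivity}, so your proof is stylistically consistent with the paper's own reasoning. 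Two minor remarks: (i) your caveat about coincident delays is well taken --- strictly, necessity only identifies the lumped coefficient $\sum_{\tau_m=\tau_i}A_m$, so the clean ``only if'' statement presumes distinct delays, a hypothesis the paper leaves implicit; (ii) necessity of $C\ge 0$ requires the output inequality $y(t)\ge 0$ to be part of the definition of internal positivity, whereas the paper's Definition~\ref{def:positive system} only demands $x(t)\ge 0$; your parenthetical ``(hence $y=Cx\ge 0$)'' silently adopts the standard, stronger definition, which is the right reading but worth flagging.
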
 

The preceding definitions and lemmas are presented in foundational references on positive systems, e.g., \cite{farina2011positive}, and in subsequent developments such as \cite[Lem. 1]{shafai2014positive}.

\subsection{NN Sector Bounds}\label{subsec:nnpreliminaries}
\begin{definition}[$\Gamma-$Sector Bounded Function]\label{def:gammasectorbound}
    Given an interval $[\Sigma_1, \Sigma_2]$ with $\Sigma_1, \Sigma_2 \in \mathbb{R}^{m\times p}$ and $\Sigma_1 \le \Sigma_2$, a static nonlinear function $\Phi(y): \mathbb{R}^p \to \mathbb{R}^m $ is said to be $\Gamma-$sector bounded within that interval if, for a compact and connected set \( \Gamma \subseteq \mathbb{R}^p \), the following condition holds:
\begin{equation}\small
    \Sigma_1 y \leq \Phi(y) \leq \Sigma_2 y, \quad \forall y \in \Gamma \subset \mathbb{R}^p.
\end{equation}
\end{definition}

\paragraph*{\textbf{Interval Arithematic}}

To apply the local sector bounds, we first need to compute box bounds on each pre-activation $\nu^{(i)}$. These boxes are calculated by propagating the input once in the forward path of the network. Initialize at the first layer by selecting an interval $[\,\underline \nu^{(1)},\overline \nu^{(1)}\,]$ containing the equilibrium $\nu_*^{(1)}$. Monotonic activations map intervals elementwise, e.g., if $\nu\in[\underline\nu,\bar\nu]$, then $\phi(\nu)\in[\phi(\underline\nu),\phi(\bar\nu)]$, yielding $[\,\underline \omega^{(1)},\overline \omega^{(1)}\,]$. For the next layer, the $i$th coordinate ($\nu_i^{(2)} = W^{(2)}_{i,:}\omega^{(1)}+b^{(2)}_{i}$) satisfies
\[
\overline \nu^{(2)}_{i}=\max_{\underline \omega^{(1)}\le \omega^{(1)}\le \overline \omega^{(1)}} \nu_i^{(2)},\qquad
\underline \nu^{(2)}_{i}=\min_{\underline \omega^{(1)}\le \omega^{(1)}\le \overline \omega^{(1)}} \nu_i^{(2)},
\]
where $W_{i,:}$ is the $i$th row of $W$. Let $c:=\tfrac12(\overline \omega^{(1)}+\underline \omega^{(1)})$ and $r:=\tfrac12(\overline \omega^{(1)}-\underline \omega^{(1)})$. Since the objective is linear over the box $[\,\underline \omega^{(1)},\overline \omega^{(1)}\,]$, these programs have the closed forms
\vspace{-.3cm}

{\small\[
\overline v^{(2)}_{i}=W^{(2)}_{i,:} c+b^{(2)}_{i}+|W^{(2)}_{i,:}|\,r,\quad
\underline v^{(2)}_{i}=W^{(2)}_{i,:} c+b^{(2)}_{i}-|W^{(2)}_{i,:}|\,r.
\]}

Thus $\nu^{(2)}\in[\underline \nu^{(2)},\overline \nu^{(2)}]$. Repeating this propagation layer-by-layer provides $[\underline \nu^{(i)},\overline \nu^{(i)}]$ for all layers, which we then use to form the local sector bounds.

\section{Main Results}
This section is organized into three interconnected subsections. First, we develop a \emph{local sector bound} for FFNNs, providing the analytical machinery and notation used throughout the sequel. Building on this foundation, the second subsection introduces a \emph{positivity-based verification} framework, leveraging the sector bounds and positive systems theory to obtain delay and uncertainty-robust certificates. The third subsection then presents an \emph{IQC-based verification} method formulated to utilize the state-of-the-art NN-in-the-loop analysis pipelines presented in \cite{yin2021stability} for our specific system; beyond furnishing an alternative certificate, this construction is used to assemble a strong benchmark against which we systematically compare our proposed method.

\begin{figure}[!t]
  \centering
  \includegraphics[width=\linewidth]{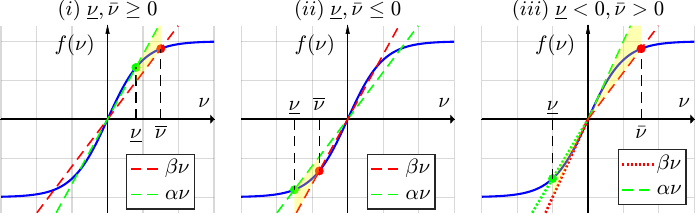} 
  \caption{Sector relaxations of $\phi(\nu)=\tanh(\nu)$:\\
(i) $\beta\nu \le \tanh(\nu) \le \alpha\nu: \quad\beta=\tanh(\overline{\nu})/\overline{\nu}, \quad \alpha=\tanh(\underline{\nu})/\underline{\nu}$.\\
(ii) $\beta\nu \le \tanh(\nu) \le \alpha\nu$
$\quad \beta=\tanh(\overline{\nu})/\overline{\nu}$, $\quad \alpha=\tanh(\underline{\nu})/\underline{\nu}$
.\\
(iii) $-\beta|\nu| \le \tanh(\nu) \le \alpha|\nu|$  with
$\beta=\alpha=\tanh'(0)=1$.\vspace{-.3cm}}
  \label{fig:tanh-relaxations}
\end{figure}

\subsection{Local Sector Bounds for FFNNs}\label{sec:local-sector}
Our analysis requires a network-level sector bound for the entire NN, whereas existing bounds in the literature (e.g., CROWN, IBP, BERN-NN) generally yield elementwise or layerwise affine enclosures that do not match the required sector geometry. Accordingly, in this section we develop a novel sector bound for FFNNs. The construction follows the standard convex-relaxation pipeline popular in ML verification bounds, but replaces the usual affine envelopes with a tailored sector relaxation of the activation functions, which is then propagated through the network to produce a \([\gamma_1,\gamma_2]\) sector bound.
Consider the FFNN in \eqref{eq:ffnn} with fixed weight/bias and input
$y$ restricted to the componentwise box
$\Gamma := \{y\in\R_+^{n_0} : \underline y \le y \le \overline y,\underline y \neq 0\footnote{Since the application of the bound is in positive systems regime, usually $y\in \R^{n_0}_+$. Moreover, setting biases to zero will waive the requirement $\underline y \neq 0$.} \}$.
\subsection*{Step A: propagation through weights and biases}
Assume the post-activation vector of layer $i$ satisfies
\begin{equation}\label{eq:startWs}
    \ell^{(i)} y \;\le\; \omega^{(i)} \;\le\; u^{(i)} y,
\end{equation}
with initialization $\ell^{(0)} = u^{(0)} = I_{n_0}$. For the next
pre-activation $\nu = W\omega + b$, split $W = W^{+}+W^{-}$ with
$W^{+}_{ij}=\max\{W_{ij},0\}$ and $W^{-}_{ij}=\min\{W_{ij},0\}$, and set
\begin{align}
\underline{\sigma} &= W^{+}\ell + W^{-}u, &
\overline{\sigma}  &= W^{+}u + W^{-}\ell. \label{eq:sigma}
\end{align}
Then
\begin{equation}\label{eq:weightshare}
\underline{\sigma}\,y \;\le\; W\,\omega \;\le\; \overline{\sigma}\,y.
\end{equation}
For the bias, let $a_1:=\min_k \underline y_k$ and $a_2:=\max_k \overline y_k$,
and define $\underline\delta,\overline\delta\in\R^{n_i\times n_0}$ rowwise by
\begin{equation}\label{eq:biasshare}\small
\underline\delta_j =
\begin{cases}
 b_j/a_2 & b_j\ge 0\\
 b_j/a_1 & b_j<0
\end{cases},\qquad
\overline\delta_j =
\begin{cases}
 b_j/a_1 & b_j\ge 0\\
 b_j/a_2 & b_j<0
\end{cases}.
\end{equation}
Thus, $\underline\delta_j y \le b\le \overline\delta_j y$. Combining \eqref{eq:weightshare}–\eqref{eq:biasshare} yields
\begin{equation}\label{eq:preactivationsectrobound}\small
    L^{(i+1)}y \;\le\; \nu^{(i+1)} = W^{(i+1)}\omega^{(i)}+b^{(i+1)} \;\le\; U^{(i+1)}y,
\end{equation}
with
$L:=\underline{\sigma}+\underline{\delta}$ and $U:=\overline{\sigma}+\overline{\delta}$.
This preserves the linear-in-$y$ sector form layer by layer.

\subsection*{Step B: Activation relaxation}\label{subsec:step}
We convert $\omega^{(i)}=\phi(\nu^{(i)})$ into linear sector bounds depending on the box of preactivation $[\underline \nu^{i},\overline \nu^{i}]$ calculated by the interval arithmetic explained in subsection \ref{subsec:nnpreliminaries}.
Assume the pre-activation of layer $i$ obeys
\begin{equation}\label{eq:firstoftanh}
      L^{(i)}y \;\le\; \nu^{(i)} \;\le\; U^{(i)}y,
\end{equation}
as shown in \eqref{eq:preactivationsectrobound}. For neuron $j$, let $\underline\nu_j^{(i)}$ and $\overline\nu_j^{(i)}$ be the
corresponding entries of the preactivation box $[\underline \nu^{i},\overline \nu^{i}]$.
On the interval $[\underline\nu_j^{(i)},\overline\nu_j^{(i)}]$, define slopes
$(\alpha_j^{(i)},\beta_j^{(i)})$ so that
$\beta_j^{(i)}\nu_j^{(i)} \le \phi(\nu_j^{(i)}) \le \alpha_j^{(i)}\nu_j^{(i)}$.
{\small\begin{align*}
&\mathbf{(i)}\;~ \text{if }\underline\nu,\overline{\nu}\ge 0:\quad
&\beta_j^{(i)}=\frac{\phi(\overline\nu)}{\overline\nu},\;
\alpha_j^{(i)}=\frac{\phi(\underline\nu)}{\underline\nu};\\
&\mathbf{(ii)}\;~ \text{if }\underline\nu, \overline\nu\le 0:\quad
&\beta_j^{(i)}=\frac{\phi(\overline\nu)}{\overline\nu},\;
\alpha_j^{(i)}=\frac{\phi(\underline\nu)}{\underline\nu};\\
&\mathbf{(iii)}\:~ \text{if }\underline\nu\le0, \overline\nu\ge 0:\quad
&\beta_j^{(i)}= \alpha_j^{(i)}=\sup_{\nu\neq 0}|\phi(\nu)|/|\nu|.
\end{align*}}

This calculation is visualized for $\phi=\tanh$ in Fig. \ref{fig:tanh-relaxations}.
Gather the neuronwise slopes into diagonal matrices
\begin{equation}\label{eq:DupandDdown}\small
\underline D^{(i)}:=\diag(\beta_j^{(i)}), \quad \overline D^{(i)}:=\diag(\alpha_j^{(i)}),
\end{equation}
and define “shape” matrices by row replacement to handle the sign-crossing case $(iii)$:
\begin{equation}\small
\label{eq:lhatUhat}
     \hat L^{(i)}_{j,:} :=
      \begin{cases}
          L^{(i)}_{j,:}   & \text{case } (i),(ii)\\
          -|L^{(i)}_{j,:}| & \text{case } (iii),
      \end{cases} \quad
      \hat U^{(i)}_{j,:} :=
      \begin{cases}
          U^{(i)}_{j,:}   & \text{case } (i),(ii)\\
          |U^{(i)}_{j,:}| & \text{case } (iii).
      \end{cases}
\end{equation}
Then the post-activation is bounded by
\begin{equation}\label{eq:DandL}\small
      \underline{D}^{(i)}\,\hat L^{(i)}\,y
      \;\le\;
      \phi\!\bigl(\nu^{(i)}\bigr)
      \;\le\;
      \overline{D}^{(i)}\,\hat U^{(i)}\,y,\qquad \forall\,y\in\Gamma.
\end{equation}

\subsection*{Network-level sector bound}
Recursing Steps A–B from input to output yields a single local sector
for the whole network.

\begin{theorem}[Local Sector Bound for FFNN]\label{thm:nn_sector_bound}
Consider an FFNN as in \eqref{eq:ffnn}.
For any $y\in\Gamma=[\underline y,\overline y]\in \R^p_+ $, the NN mapping $\Phi(y)$ satisfies
\begin{equation}\label{eq:localsectorbound}
\gamma_1\, y \;\le\; \Phi(y) \;\le\; \gamma_2\, y,
\end{equation}
{\small\[
\gamma_1 = L^{(q+1)} \!\left( \prod_{i=1}^{q} \underline{D}^{(i)} \hat{L}^{(i)} \right),\quad
\gamma_2 = U^{(q+1)} \!\left( \prod_{i=1}^{q} \overline{D}^{(i)} \hat{U}^{(i)} \right),
\]}

where $L^{(i)},U^{(i)}$ come from \eqref{eq:preactivationsectrobound},
and $\underline{D}^{(i)},\overline{D}^{(i)},\hat L^{(i)},\hat U^{(i)}$ from
\eqref{eq:DupandDdown}–\eqref{eq:lhatUhat}.
\end{theorem}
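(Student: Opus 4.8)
The plan is to prove the enclosure by induction on the layer index, using the two single-layer results already assembled in Steps A and B as the inductive engine, and then to unroll the resulting recursion into the stated products. Throughout, the standing hypothesis $y\in\Gamma\subseteq\R^p_+$ (so $y\ge 0$ componentwise, with $a_1=\min_k\underline y_k>0$) is exactly what keeps every inequality direction-preserving under the linear-in-$y$ sector form, so I would isolate it at the outset and invoke it at each sign-sensitive step.

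I would carry the induction with the invariant that the post-activation of layer $i$ admits a linear sector $\ell^{(i)}y\le\omega^{(i)}\le u^{(i)}y$ for all $y\in\Gamma$. The base case is immediate, since $\omega^{(0)}=y$ gives $\ell^{(0)}=u^{(0)}=I_{n_0}$. For the inductive step I would first invoke Step A to push the layer-$i$ sector through the affine map $\nu^{(i+1)}=W^{(i+1)}\omega^{(i)}+b^{(i+1)}$: splitting $W=W^{+}+W^{-}$, the nonnegative part preserves and the nonpositive part reverses the componentwise inequalities (here $y\ge 0$ is essential), giving $\underline\sigma\,y\le W\omega\le\overline\sigma\,y$ with $\underline\sigma,\overline\sigma$ as in \eqref{eq:sigma}; the constant bias is enclosed linearly via \eqref{eq:biasshare}, using $a_1\le y_k\le a_2$ and the sign-dependent division (well defined because $a_1>0$). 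Summing yields the pre-activation sector \eqref{eq:preactivationsectrobound}. I would then apply Step B to convert it into a post-activation sector $\underline D^{(i+1)}\hat L^{(i+1)}y\le\phi(\nu^{(i+1)})\le\overline D^{(i+1)}\hat U^{(i+1)}y$, closing the induction with $\ell^{(i+1)}=\underline D^{(i+1)}\hat L^{(i+1)}$ and $u^{(i+1)}=\overline D^{(i+1)}\hat U^{(i+1)}$.

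The delicate point, and the step I expect to be the main obstacle, is the sign-crossing case $(iii)$ of Step B, where $\nu_j^{(i)}$ may take either sign over its box and the scalar sector $\beta_j\nu_j\le\phi(\nu_j)\le\alpha_j\nu_j$ is no longer valid. Here I would argue neuronwise by a sign split that does not require knowing the sign of $\nu_j^{(i)}$ in advance: with the global slope $\alpha_j=\beta_j=\sup_{\nu\ne0}|\phi(\nu)|/|\nu|$ and the sign structure of the monotone $\phi$ (monotone with $\phi(0)=0$), the upper bound $\phi(\nu_j)\le\alpha_j|U^{(i)}_{j,:}|y$ holds in both branches — when $\nu_j\ge0$ via $\phi(\nu_j)\le\alpha_j\nu_j\le\alpha_j U^{(i)}_{j,:}y\le\alpha_j|U^{(i)}_{j,:}|y$ (using $y\ge0$), and when $\nu_j<0$ via $\phi(\nu_j)\le0\le\alpha_j|U^{(i)}_{j,:}|y$ — and symmetrically $\phi(\nu_j)\ge-\alpha_j|L^{(i)}_{j,:}|y$. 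This is precisely the content of the row replacement \eqref{eq:lhatUhat}, and checking that it holds for every admissible $\nu_j$ in the box (not merely at the endpoints) is the part I would write out most carefully. The monotone cases $(i)$–$(ii)$ are comparatively routine: the slopes are read off the interval endpoints and the rows are kept intact.

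Finally, the output layer $u=\Phi(y)=W^{(q+1)}\omega^{(q)}+b^{(q+1)}$ carries no activation, so only Step A applies once more to the layer-$q$ sector, producing $L^{(q+1)}y\le\Phi(y)\le U^{(q+1)}y$. Unrolling the recursion $\ell^{(i)}=\underline D^{(i)}\hat L^{(i)}$, $u^{(i)}=\overline D^{(i)}\hat U^{(i)}$ from $i=1$ to $q$ and substituting into this last step collapses the chain into the advertised $\gamma_1=L^{(q+1)}\prod_{i=1}^q\underline D^{(i)}\hat L^{(i)}$ and $\gamma_2=U^{(q+1)}\prod_{i=1}^q\overline D^{(i)}\hat U^{(i)}$. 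The one bookkeeping check I would not skip is reconciling the recursive definitions of $L^{(i)},U^{(i)}$ (which already fold in the accumulated sector matrices $\ell^{(i-1)},u^{(i-1)}$ of earlier layers through \eqref{eq:sigma}) with the factored product form of the statement — that is, verifying that the per-layer factors telescope correctly so that the accumulation is neither dropped nor double-counted.
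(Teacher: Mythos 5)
Your proposal is correct and follows essentially the same route as the paper, whose ``proof'' of Theorem~\ref{thm:nn_sector_bound} is nothing more than the statement that recursing Steps A--B from input to output yields the network-level sector; your layerwise induction, the sign-split treatment of the crossing case $(iii)$ (which is exactly what the row replacement \eqref{eq:lhatUhat} encodes), and the final unrolling into the two products are a faithful elaboration of that construction. The one point you flag as a bookkeeping check --- reconciling the recursive $L^{(i)},U^{(i)}$ (which fold in the accumulated $\ell^{(i-1)},u^{(i-1)}$ via \eqref{eq:sigma}) with the factored product form of $\gamma_1,\gamma_2$ --- is indeed the only place where the paper's notation requires care, and your awareness of it is appropriate.
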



\subsection{Positivity-Based Verification Method}
In this section, we utilize local sector bounds and positivity constraint on systems to develop a scalable delay-independent method for verification of NN feedback loops.
\begin{assumption}\label{assump1}
$\underline{A}_0$ is Metzler and $B_i,C \ge 0$.
\end{assumption}
\begin{lemma}\label{lem:lurepositivity}
    Suppose Assumption \ref{assump1} holds for system \eqref{eq:wholesys}, and the feedback nonlinearity $\Phi$ is $\Gamma-$sector bounded in $[\Sigma_1,\Sigma_2]$. Then  the Lur'e system \eqref{eq:wholesys} is internally positive in $\Gamma$, if and only if $\underline A_i + B_i\Sigma_1C \ge 0$.
\end{lemma}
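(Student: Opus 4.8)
The plan is to close the loop, reduce the interconnection to an autonomous delayed comparison system via the sector bound, and then invoke Lemma~\ref{lem:delayedpositivesys}. Substituting $u(t)=\Phi(Cx(t))$ into \eqref{eq:wholesys} yields
\begin{equation*}
\dot x(t)=A_0 x(t)+\sum_{i=1}^{\ell}\bigl[A_i x(t-\tau_i)+B_i\,\Phi\!\bigl(Cx(t-\tau_i)\bigr)\bigr].
\end{equation*}
Since $\underline A_0$ is Metzler and every admissible $A_0\in[\underline A_0,\overline A_0]$ has off-diagonal entries no smaller than those of $\underline A_0$, each realization $A_0$ is itself Metzler, so the instantaneous term already meets the quasi-positivity requirement; it remains to control the delayed contributions.

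For the sufficiency direction I would establish forward invariance of $\R^n_+$. Starting from a nonnegative history, whenever $x(t-\tau_i)\ge 0$ we have $y=Cx(t-\tau_i)\ge 0$ (because $C\ge 0$), hence $y\in\Gamma$ and the lower sector bound gives $\Phi(Cx(t-\tau_i))\ge \Sigma_1 C x(t-\tau_i)$. Multiplying by $B_i\ge 0$ and using $A_i\ge\underline A_i$ produces the entrywise comparison
\begin{equation*}
A_i x(t-\tau_i)+B_i\,\Phi\!\bigl(Cx(t-\tau_i)\bigr)\ \ge\ (\underline A_i+B_i\Sigma_1 C)\,x(t-\tau_i)\ \ge\ 0,
\end{equation*}
the last step invoking the hypothesis $\underline A_i+B_i\Sigma_1 C\ge 0$. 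On any boundary face $\{x_k=0,\ x\ge 0\}$, the Metzler structure of $A_0$ forces $[A_0x]_k\ge 0$, while the displayed comparison makes each delayed contribution nonnegative, so $\dot x_k\ge 0$; hence $\R^n_+$ is forward invariant and the closed loop is internally positive on $\Gamma$.

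For the necessity direction I would instantiate the extremal admissible data $A_i=\underline A_i$ together with the linear feedback $\Phi(y)=\Sigma_1 y$, which is itself $\Gamma$-sector bounded in $[\Sigma_1,\Sigma_2]$ (the lower bound holds with equality, and $\Sigma_1 y\le\Sigma_2 y$ follows from $\Sigma_1\le\Sigma_2$ and $y\ge0$ on $\Gamma$). This admissible instance collapses the closed loop to the autonomous linear delayed system of the form \eqref{eq:generaldelaylti} with Metzler $A_0=\underline A_0$ and delayed matrices $\underline A_i+B_i\Sigma_1 C$. Lemma~\ref{lem:delayedpositivesys} then characterizes internal positivity by nonnegativity of those matrices, forcing $\underline A_i+B_i\Sigma_1 C\ge 0$.

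The step I expect to be most delicate is this necessity argument: one must justify that $\Phi(y)=\Sigma_1 y$ and $A_i=\underline A_i$ are genuine members of the sector/interval family, so that positivity of the full interconnection specializes exactly to the linear worst case, and that the local domain $\Gamma$ does not obstruct applying Lemma~\ref{lem:delayedpositivesys} on $\R^n_+\cap\Gamma$. The remaining points—transferring the Metzler property across $[\underline A_0,\overline A_0]$ and verifying that $Cx$ stays in $\Gamma$ along nonnegative trajectories so the local sector bound applies at each instant—are routine bookkeeping.
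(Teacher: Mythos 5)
Your proposal is correct and follows essentially the same route as the paper: necessity by instantiating the linear lower-sector feedback $\Phi(y)=\Sigma_1 y$ with the extremal matrices and invoking Lemma~\ref{lem:delayedpositivesys}, and sufficiency by a boundary (first-crossing) argument showing forward invariance of $\R^n_+$ using the Metzler property of $A_0$, the lower sector bound, and $\underline A_i+B_i\Sigma_1 C\ge 0$. The caveat you flag about $Cx$ remaining in $\Gamma$ is treated at the same (implicit) level of rigor in the paper's own proof.
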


\begin{proof}
To avoid notational complexity, we present the proof for a single delay channel ($\ell=1$), however, the argument extends unchanged to multiple channels.

\emph{Necessity.} Suppose $\Phi(Cx(t-\tau)) = \Sigma_1 C x(t-\tau)$ which belongs to $\text{Sector}[\Sigma_1, \Sigma_2]$. In this case, the closed-loop system reduces to an LTI system with dynamics $A_0 x(t) + \left[ A_1 +  B\Sigma_1 C\right]x(t-\tau)$. 
If the Lur’e system is positive for all admissible nonlinearities and all admissible $A_0$ and $A_i$, it must, in particular, be positive for the linear case $\underline A_0 x(t) + \left[ \underline A_1 +  B\Sigma_1 C\right]x(t-\tau)$. 
As stated in Lemma \ref{lem:delayedpositivesys}, positivity of a delayed linear system requires the system matrix $\underline A_0$ to be Metzler and $\underline A_1+B\Sigma_1 C \ge 0$; therefore, the necessity follows.

\emph{Sufficiency.} Suppose that $\underline A_0$ is Metzler, $\underline A_1 + B\Sigma_1 C \ge 0$, and $\Phi \in \text{Sector}[\Sigma_1, \Sigma_2]$. Consider the $j$-th component of the state equation:
\begin{align*}
&\dot{x}_j(t) = A_{0_{jj}} x_j(t) + A_{1_{jj}} x_j(t-\tau) +\\
& \sum_{k = 1, k \neq j}^{n} A_{0_{jk}} x_k(t) + A_{1_{jk}} x_k(t-\tau) + (B\Phi(Cx(t-\tau)))_j.
\end{align*}

Assume $x(0),x_{h,1} \in \mathbb{R}^n_+$ and suppose, for contradiction, that some component $x_j(t)$ becomes negative. Therefore, continuity of solutions suggests that there exists a first time $t_1 \geq 0$ such that $x_j(t_1) = 0$, with $x_k(t_1) \geq 0$ for all $k \neq j$ and $Cx(t-\tau) \geq 0$ for $t \in [-\tau, t_1]$. By the sector condition, partially ordered sets, and nonnegativity of $B$, we obtain:
\vspace{-.3cm}

{\small \begin{align*}
&\dot{x}_j(t_1) \geq \underline A_{0_{jj}} x_j(t_1) + (\underline A_1 + B\Sigma_1 C)_{jj} x_j(t_1-\tau) + \\ & \sum_{\substack{k = 1 , k \neq j}}^{n} \underline A_{0_{jk}} x_k(t_1) + (\underline A_1 + B\Sigma_1 C)_{jk} x_k(t_1-\tau).
\end{align*}}

Since $x_j(t_1) =0$, $\underline A_0$ is Metzler,  $\underline A_1 + B\Sigma_1 C\ge 0$, and $x_k(t) \geq 0$, it follows that $\dot{x}_i(t_1) \geq 0$, contradicting the assumption that $x_j$ becomes negative. Hence, the trajectory remains in $\mathbb{R}^n_+$ for all $t \geq 0$.
\end{proof}
For analysis purpose we consider three risk configurations:
\begin{enumerate}[label=(C\arabic*)]
    \item \textbf{Interval uncertainty, no delay:} the plant has interval (elementwise) parameter uncertainty and no time delays.
    \item \textbf{Delay only:} constant time delays in the state and input channels, with no parametric uncertainty.
    \item \textbf{Combined case:} simultaneous interval (elementwise) uncertainty and state/input time delays.
\end{enumerate}


\subsection*{C1 - Interval Uncertainty, No Delay NN Feedback Loops}
Here we consider the system \eqref{eq:wholesys} with $\tau_i = 0$:
\begin{equation}\label{eq:nodelay}\small
    \dot x(t) = [\underline{A}_0,\overline{A}_0]\,x(t)
    + \sum_{i=1}^{\ell} [\underline{A}_i,\overline{A}_i]\,x(t)
    + B_i \Phi\big(Cx(t)\big), 
\end{equation}
\begin{theorem}\label{the:the1}
    Consider the system \eqref{eq:nodelay} with the nonlinear feedback characterized by an FFNN as in \eqref{eq:ffnn} which is $\Gamma-$sector bounded in $[\gamma_1,\gamma_2]$. Under assumption \ref{assump1}, the NN feedback loop is locally exponentially stable in the set $\Gamma$ if $\underline{A}+B\gamma_1C$ is Metzler and $\overline A+B\gamma_2C$ is Hurwitz, where $\overline A=\sum_{i=0}^l \overline A_i, \underline A=\sum_{i=0}^l \underline A_i$, and $\gamma_1,\gamma_2$ are calculated using \eqref{eq:localsectorbound}.
\end{theorem}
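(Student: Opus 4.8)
The plan is to dominate the uncertain, NN-driven closed loop by a single linear positive system and then certify that dominating system with a linear (copositive) Lyapunov function. Writing the no-delay dynamics as $\dot x = A x + B\Phi(Cx)$ with $A = \sum_{i=0}^{\ell} A_i \in [\underline A,\overline A]$, I would introduce the two bracketing matrices $\underline M := \underline A + B\gamma_1 C$ and $\overline M := \overline A + B\gamma_2 C$. The whole argument rests on two order relations: positivity of the trajectory (so that the interval and sector bounds may be applied elementwise) and the resulting comparison inequality $\dot x \le \overline M x$, after which a strictly positive vector certifying $v^\top \overline M < 0$ delivers exponential decay.

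First I would establish positivity, i.e.\ $x(t)\ge 0$ whenever $x(0)\ge 0$. This is the no-delay specialization of Lemma \ref{lem:lurepositivity}: under Assumption \ref{assump1} and with $\underline M$ Metzler, the sector lower bound $\Phi(Cx)\ge \gamma_1 Cx$ together with $B,C\ge 0$ gives, at any first time $t_1$ where some component satisfies $x_j(t_1)=0$ while $x(t_1)\ge 0$, the inequality $\dot x_j(t_1)\ge (\underline M x(t_1))_j = \sum_{k\neq j}(\underline M)_{jk}x_k(t_1)\ge 0$, since the diagonal term drops out and the off-diagonal entries of the Metzler matrix $\underline M$ are nonnegative. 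Hence no component can cross zero and the positive orthant is forward invariant. I would then observe that $\overline M$ is automatically Metzler: because $\gamma_1\le\gamma_2$ and $\underline A\le\overline A$ with $B,C\ge 0$, we have $\underline M \le \overline M$ elementwise, so the off-diagonal entries of $\overline M$ dominate those of the Metzler matrix $\underline M$ and are therefore nonnegative.

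With $x(t)\ge 0$ in hand, the sector and interval upper bounds become usable componentwise. Since $\overline M$ is Metzler and, by hypothesis, Hurwitz, Lemma \ref{lem:lemav} furnishes $v>0$ with $v^\top \overline M<0$; setting $\epsilon := \min_k \big(-(v^\top\overline M)_k/v_k\big)>0$ yields $v^\top\overline M \le -\epsilon\,v^\top$ componentwise. Taking the copositive candidate $V(x)=v^\top x$ and using $x\ge 0$, $v\ge 0$, $A\le\overline A$, $B\ge 0$, and $\Phi(Cx)\le\gamma_2 Cx$ valid on $\Gamma$, I would bound
\begin{equation*}
\dot V = v^\top\big(Ax+B\Phi(Cx)\big)\le v^\top(\overline A+B\gamma_2 C)x = v^\top\overline M x \le -\epsilon\,v^\top x = -\epsilon V .
\end{equation*}
As $v>0$, $V$ is equivalent to $\|x\|_1$ on $\R^n_+$, so $V(x(t))\le e^{-\epsilon t}V(x(0))$ gives $\|x(t)\|_1 \le (\max_k v_k/\min_k v_k)\,e^{-\epsilon t}\|x(0)\|_1$, i.e.\ local exponential stability.

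The main obstacle is locality: every inequality above holds only while $y(t)=Cx(t)$ stays inside the box $\Gamma$ on which the $[\gamma_1,\gamma_2]$ sector bound was derived. I would close this gap by exhibiting a forward-invariant neighborhood of the origin contained in $\{x\ge 0 : Cx\in\Gamma\}$: because $V=v^\top x$ decreases monotonically and its sublevel sets $\Omega_c=\{x\ge0 : v^\top x\le c\}$ are positively invariant under the established dynamics, it suffices to choose $c$ small enough that $C\,\Omega_c\subseteq\Gamma$, so trajectories starting in $\Omega_c$ never leave the region of validity. Verifying this containment—together with the continuity and first-crossing technicalities in the positivity argument—is where care is needed; the vector $v$ and the Lyapunov estimate are then routine.
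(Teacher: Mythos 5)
Your proof is correct and follows essentially the same route as the paper's: positivity of the trajectory via the Metzler condition on $\underline A + B\gamma_1 C$ (the no-delay case of Lemma \ref{lem:lurepositivity}), then the copositive Lyapunov function $V(x)=v^\top x$ with $v>0$ certifying the Metzler--Hurwitz matrix $\overline A + B\gamma_2 C$ to get $\dot V \le -\varepsilon V$. Your closing remark on choosing a sublevel set $\Omega_c$ with $C\,\Omega_c\subseteq\Gamma$ to make the local claim rigorous is a welcome addition that the paper only defers to a separate note on region-of-attraction analysis.
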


\begin{proof}
Assume $x_0 \geq 0$. Given $\dot x \ge \underline{A}+B\gamma_1C$, by Lemma \ref{lem:lurepositivity} (with no delay channel), the Metzler condition on $\underline{A}+B\gamma_1C\geq 0$ ensures $x(t) \geq 0$ for all $t \geq 0$.
Define $M := \overline{A} + B\gamma_2 C$, which is Hurwitz and Metzler due to the Metzler lower sector. By Lemma \ref{lem:lemav}, there exists a positive vector $v > 0$ and scalar $\varepsilon > 0$ such that $v^\top M \leq -\varepsilon v^\top$.

Consider the Lyapunov function $V(x) = v^\top x$.
Since $v > 0$ and $x(t) \geq 0$, we have $V(x) > 0$ for all $x \neq 0$ and $V(0) = 0$.

Computing the time derivative along system trajectories:
$$\dot{V}(x) = v^\top \dot{x} = v^\top \left[Ax(t) + B\Phi(Cx(t))\right]\le v^\top M x.$$
Therefore
$\dot{V}(x) \leq v^\top M x \leq -\varepsilon v^\top x = -\varepsilon V(x).$
This implies $V(x(t)) \leq V(x(0))e^{-\varepsilon t}$, establishing local exponential stability in $\Gamma$.
\end{proof}

\subsection*{C2 - Delayed Only NN feedback loops}

\begin{theorem}\label{thm:kraso_pos_aiz_delay}
Consider the delayed Lur'e system
\begin{equation}\label{eq:lure_delay_TAC}\small
\dot x(t)=A_0 x(t)+\sum_{i=1}^\ell A_i x(t-\tau_i)+B_i\,\Phi\!\big(Cx(t-\tau_i)\big),
\end{equation}
where the feedback $\Phi(Cx)$ is characterized by an FFNN as in \eqref{eq:ffnn}. Suppose there exists a local $\Gamma-$set including the origin in which $\Phi(Cx)$ is sector bounded in $[\gamma_1,\gamma_2]$.
Assume

\textbf{(I)} (\emph{Internal positivity at the lower sector}) $A_0$ is Metzler, $B,C\ge 0$, and
\(
A_i+B\gamma_i C\ \ge\ 0 .
\)

\textbf{(II)} (\emph{Upper-sector DC Hurwitz}) The DC matrix at the upper sector
\(
H :=\sum_{i=0}^nA_i+B\gamma_2 C
\)
is Hurwitz.

Then, for every delay $\tau\ge 0$ 
the closed-loop system \eqref{eq:lure_delay_TAC} is a positive system on $\mathbb{R}^n_+$ and is locally exponentially stable from nonnegative histories.
\end{theorem}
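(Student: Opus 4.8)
The plan is to follow the two-step template of Theorem~\ref{the:the1}—first positivity, then a Lyapunov certificate—but to replace the static Lyapunov function by a linear copositive Lyapunov--Krasovskii functional so that the memory terms are handled channel by channel. For the positivity step I would argue exactly as in Lemma~\ref{lem:lurepositivity}: starting from a nonnegative history and using the lower sector bound $\Phi(Cx)\ge\gamma_1 Cx$ (admissible as long as $Cx\ge 0$), the right-hand side of \eqref{eq:lure_delay_TAC} dominates, componentwise, that of the positive linear delay system with Metzler drift $A_0$ and nonnegative delayed matrices $A_i+B_i\gamma_1 C\ge 0$. A first-crossing contradiction then applies coordinatewise: if some $x_j$ is the first to reach $0$ while every other coordinate and every delayed state is still nonnegative, then $A_0$ Metzler, $B_i,C\ge 0$, and $A_i+B_i\gamma_1C\ge 0$ force $\dot x_j\ge 0$, so the trajectory cannot exit $\R^n_+$. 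This yields positivity for every delay $\tau_i\ge 0$.

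For stability I would first observe that $H$ is Metzler, since $A_0$ is Metzler and each $A_i+B_i\gamma_2 C\ge A_i+B_i\gamma_1 C\ge 0$ (using $\gamma_2\ge\gamma_1$ and $B_i,C\ge 0$); being also Hurwitz by Assumption~(II), Lemma~\ref{lem:lemav} furnishes $v>0$ and $\varepsilon>0$ with $v^\top H\le -\varepsilon v^\top$. Writing $w_i:=(A_i+B_i\gamma_2C)^\top v\ge 0$, I would take
\[
V(t)=v^\top x(t)+\sum_{i=1}^\ell\int_{t-\tau_i}^{t} w_i^\top x(s)\,\mathrm{d}s .
\]
Because $x(t)\ge 0$, the upper sector bound $\Phi(Cx)\le\gamma_2 Cx$ together with $v^\top B_i\ge 0$ gives $v^\top\dot x\le v^\top A_0 x(t)+\sum_i w_i^\top x(t-\tau_i)$; the boundary terms produced by differentiating the integrals cancel the delayed contributions, leaving
\[
\dot V(t)\le v^\top H\,x(t)\le -\varepsilon\,v^\top x(t)\le 0 .
\]
Hence $V$ is nonincreasing and, since $v>0$ and $x\ge 0$, a Barbalat/LaSalle argument already delivers $x(t)\to 0$ for every choice of delays.

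The main obstacle is upgrading this to a genuine exponential rate, because $\dot V\le-\varepsilon v^\top x(t)$ does not control the integral memory in $V$, so $\dot V\le -cV$ cannot be read off directly. I would resolve it by weighting the kernels as $e^{\lambda(s-t+\tau_i)}w_i^\top x(s)$ with a small $\lambda>0$: differentiation now reproduces $v^\top H x(t)$, a correction $\sum_i(e^{\lambda\tau_i}-1)w_i^\top x(t)$, and the decay contribution $-\lambda\big(V(t)-v^\top x(t)\big)$. On $\R^n_+$ one has $w_i^\top x(t)\le(\norm{w_i}_\infty/\min_k v_k)\,v^\top x(t)$, so choosing $\lambda$ small enough—depending on the fixed delays $\tau_i$—that the correction is dominated by $\tfrac{\varepsilon}{2}v^\top x(t)$ and $\lambda\le\varepsilon/2$ yields $\dot V(t)\le-\lambda V(t)$, hence $v^\top x(t)\le V(t)\le V(0)e^{-\lambda t}$; with $v>0$ and $x\ge 0$ this is componentwise exponential decay. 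Equivalently, one may invoke the comparison principle, bounding $x$ above by the positive linear comparison system with DC matrix $H$, for which delay-independent exponential stability is the standard positive-delay-system result. Finally, as in Lemma~\ref{lem:lurepositivity}, the single-delay bookkeeping above extends verbatim to all $\ell$ channels, completing the argument.
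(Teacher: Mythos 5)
Your proposal follows the same overall skeleton as the paper's proof: positivity via the first-crossing argument of Lemma~\ref{lem:lurepositivity} at the lower sector, then a linear copositive Lyapunov--Krasovskii functional built from the left eigen-type vector $v>0$ satisfying $H^\top v\le-\varepsilon v$ (Lemma~\ref{lem:lemav}), with $w_i^\top=v^\top(A_i+B_i\gamma_2C)$ feeding the integral memory terms. Up to the inequality $\dot V\le-\varepsilon\,v^\top x(t)$ the two arguments coincide. Where you diverge is the upgrade to an exponential rate. The paper keeps the unweighted functional $W(t)=v^\top x(t)+\int_{t-\tau}^t w^\top x(s)\,ds$ and closes with a Razumikhin/Halanay-flavored step: at times where $v^\top x(t)$ attains the supremum of $V$ over the window $[t-\tau,t]$ one has $W(t)\le(1+\beta\tau)V(x(t))$ with $\beta=\max_j (M^\top v)_j/v_j$, whence $\dot W\le-\tfrac{\varepsilon}{1+\beta\tau}W$. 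You instead weight the kernels by $e^{\lambda(s-t+\tau_i)}$ and choose $\lambda$ small enough that the correction terms $\sum_i(e^{\lambda\tau_i}-1)w_i^\top x(t)$ are absorbed by $-\tfrac{\varepsilon}{2}v^\top x(t)$, giving $\dot V\le-\lambda V$ pointwise in time. Your route is, if anything, the more rigorous of the two: the paper's claim that the windowed-supremum condition holds ``on a dense set by continuity'' is informal and really needs a Halanay-type comparison lemma to conclude $W(t)\le W(0)e^{-ct}$, whereas your weighted functional yields the differential inequality at every $t$ and so the Gr\"onwall step is immediate. Both yield a decay rate that degrades with the delay while the certificate itself ($A_0$ Metzler, $A_i+B\gamma_1C\ge0$, $H$ Hurwitz) remains delay-independent, which is the point of the theorem. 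Your observation that $H$ is Metzler because $A_i+B_i\gamma_2C\ge A_i+B_i\gamma_1C\ge0$ makes explicit a step the paper leaves implicit.
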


\begin{proof}
For notational simplicity, we present the proof for a single delay channel (\(\ell=1\)); the extension to multiple channels follows analogously.

\emph{Positivity.}
The positivity of the nonlinear closed loop system follows from Lemma \ref{lem:lurepositivity} by fixing $A_0$ and $A_1$.

\smallskip
\emph{stability certificate.}
Since $H$ is Hurwitz and Metzler (from condition (I)), there exists a positive vector $v > 0$ and scalar $\varepsilon>0$ such that
\begin{equation}\label{eq:cop_cert}\small
H^\top v\ \le\ -\,\varepsilon\,v .
\end{equation}

Assume the function $V(x)=v^\top x$.
Let $M:=A_1+B\gamma_2 C\ge 0$ and $w^\top:=v^\top M\ge 0$. Along positive trajectories of \eqref{eq:lure_delay_TAC}, using the upper sector bound inside the derivative estimate,
\vspace{-.3cm}

{\small\begin{align}\label{eq:inequality}
\dot V(t)
&= v^\top A_0 x(t)+v^\top A_1 x(t-\tau)+v^\top B\,\Phi\!\big(Cx(t-\tau)\big) \nonumber\\
&\le v^\top A_0 x(t)+v^\top\big(A_1+B\Sigma_2 C\big)x(t-\tau)\nonumber\\
&= v^\top H x(t)\;-\;w^\top x(t)\;+\;w^\top x(t-\tau)\nonumber\\
&\le -\,\varepsilon\,V(t)\;-\;w^\top x(t)\;+\;w^\top x(t-\tau),
\end{align}}

where the last inequality uses \eqref{eq:cop_cert} and positivity of $x(\cdot)$.

Define
\(
\beta\ :=\ \max_{j}\frac{(M^\top v)_j}{v_j}.
\)
Thus, 
\begin{equation}\label{eq:betadeefect}
w^\top x\le \beta\,V(x), \quad \forall x\in\mathbb{R}^n_+.
\end{equation}

Introduce the copositive Lyapunov–Krasovskii functional
\vspace{-.2cm}

{\small \begin{equation*}
W(t)\ :=\ V\big(x(t)\big)\ +\ \int_{t-\tau}^{t} w^\top x(s)\,ds\ \ \ge 0 .
\end{equation*}}

Differentiating and using \eqref{eq:inequality} yields
\[
\dot W(t)\ =\ \dot V(t)+w^\top x(t)-w^\top x(t-\tau)\ \le\ -\,\varepsilon\,V(t).
\]
As a result, the stability is guaranteed. To prove exponential decay, we use \eqref{eq:betadeefect} to write
\vspace{-.1cm}

{\small\[
W(t)\ \le\ V\big(x(t)\big)\ +\ \beta\tau \sup_{u\in[t-\tau,t]}V\big(x(u)\big).
\]}

At times when $V\big(x(t)\big)=\sup_{u\in[t-\tau,t]}V\big(x(u)\big)$ (which hold on a dense set by continuity), we have
$W(t)\le (1+\beta\tau)\,V\big(x(t)\big)$, hence
\vspace{-.1cm}

{\small\[
\dot W(t)\ \le\ -\,\varepsilon\,V(t)\ \le\ -\,\frac{\varepsilon}{1+\beta\tau}\,W(t).
\]}

Therefore,
\(
W(t)\ \le\ W(0)\,\exp\!\Big(-\frac{\varepsilon}{1+\beta\tau}\,t\Big)\qquad\forall\,t\ge 0,
\)
and consequently $V\big(x(t)\big)\to 0$ exponentially. Since the system is positive, this implies exponential stability from nonnegative histories inside $\Gamma-$set for every $\tau\ge 0$.
\end{proof}


\subsection*{C3 - Delayed uncertain NN feedback loops}
In the third case we combine the previous two sections and their results.
\begin{theorem}\label{thm:kraso_pos_aiz_delayinterval}
Consider the delayed Lur'e system
\begin{equation}\label{eq:lure_uncertaindelay_TAC}\small
\dot{x}(t)=\left[\underline{A}_0, \bar{A}_0\right] x(t)+\sum_{i=1}^l\left[\underline{A}_i, \bar{A}_i\right] x\left(t-\tau_i\right)+B_i \,\Phi\!\big(Cx(t-\tau_i)\big)
\end{equation}

\textbf{(I)} (\emph{Sector on the positive cone}) Assume there exists a local $\Gamma$-set including the origin in which the NN $\Phi(Cx)$ is sector bounded in $[\gamma_1,\gamma_2]$.

\textbf{(II)} (\emph{Internal positivity at the lower endpoint}) $\underline A_0$ is Metzler, $B,C\ge 0$, and
\(
\underline A_1+B\gamma_1 C\ \ge\ 0 .
\)

\textbf{(III)} (\emph{Upper-endpoint DC Hurwitz}) The DC matrix at the upper endpoint
\(
H\ :=\overline A_0+\overline A_1+B\gamma_2 C
\)
is Hurwitz.

Then, for every delay $\tau\ge 0$
the closed-loop system \eqref{eq:lure_uncertaindelay_TAC} is a positive system on $\mathbb{R}^n_+$ and is globally exponentially stable from nonnegative histories.
\end{theorem}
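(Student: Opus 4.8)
The plan is to fuse the two mechanisms already established separately: the interval-uncertainty argument of Theorem \ref{the:the1} and the delay argument of Theorem \ref{thm:kraso_pos_aiz_delay}. Concretely, I would use the \emph{lower} endpoints $(\underline A_0,\underline A_1)$ together with the lower slope $\gamma_1$ to certify positivity, and the \emph{upper} endpoints $(\overline A_0,\overline A_1)$ together with the upper slope $\gamma_2$ to certify exponential decay. The structural fact that links the interval family to the positive-systems machinery is monotonicity of the relevant expressions: since $\underline A_0$ is Metzler and every admissible $A_0\in[\underline A_0,\overline A_0]$ is obtained by enlarging entries of $\underline A_0$, its off-diagonal entries only grow and stay nonnegative, so \emph{every} admissible $A_0$ is Metzler; likewise, because $B,C\ge 0$, we have $A_1+B\gamma_1 C\ge \underline A_1+B\gamma_1 C\ge 0$ for every $A_1\in[\underline A_1,\overline A_1]$.

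For positivity, I would fix an arbitrary admissible realization $(A_0,A_1)$ and apply Lemma \ref{lem:lurepositivity} with $\Sigma_1=\gamma_1$: condition (II) supplies exactly its hypotheses ($A_0$ Metzler and $A_1+B\gamma_1 C\ge 0$), so the closed loop keeps $\mathbb{R}^n_+$ invariant from nonnegative histories, and by the monotonicity above this holds uniformly over the whole uncertainty set.

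For stability I would first verify that the DC matrix $H=\overline A_0+\overline A_1+B\gamma_2 C$ is not only Hurwitz (condition (III)) but also Metzler: $\overline A_0$ is Metzler by the same monotonicity argument, while $\overline A_1+B\gamma_2 C\ge \underline A_1+B\gamma_1 C\ge 0$ (using $\gamma_2\ge\gamma_1$ and $B,C\ge 0$) contributes only nonnegative off-diagonal entries. Lemma \ref{lem:lemav} then yields a single copositive pair $v>0,\varepsilon>0$ with $H^\top v\le-\varepsilon v$. With $V(x)=v^\top x$, $M:=\overline A_1+B\gamma_2 C\ge 0$, $w^\top:=v^\top M$, and the functional $W(t)=V(x(t))+\int_{t-\tau}^{t}w^\top x(s)\,ds$, I would bound $\dot V$ along positive trajectories, absorbing the uncertainty by the weighted-monotonicity step: since $v\ge 0$ and $x(\cdot)\ge 0$, $v^\top A_0 x\le v^\top\overline A_0 x$ and $v^\top A_1 x(t-\tau)\le v^\top\overline A_1 x(t-\tau)$, and the upper sector gives $v^\top B\,\Phi(Cx)\le v^\top B\gamma_2 C x$ on $\Gamma$. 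This reproduces inequality \eqref{eq:inequality} verbatim with $A_i$ replaced by $\overline A_i$, after which the identical telescoping together with the domination bound \eqref{eq:betadeefect} gives $\dot W\le-\tfrac{\varepsilon}{1+\beta\tau}W$ and hence delay-independent exponential decay of $V(x(t))$.

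The main obstacle is conceptual rather than computational: I must show that one \emph{fixed} Lyapunov pair $(v,\varepsilon)$, generated solely from the single vertex matrix $H$, simultaneously dominates \emph{every} realization in the interval family. This is exactly where the sign structure ($B,C\ge 0$, $v>0$, $x\ge 0$) and the two-sided use of monotonicity (lower endpoints for invariance, upper endpoints for decay) must be applied consistently. A secondary point to state with care is the region of validity: because the sector enclosure $[\gamma_1,\gamma_2]$ is assumed only on the local set $\Gamma$, the estimate rigorously applies to nonnegative histories whose trajectories remain in $\Gamma$, and the conclusion extends globally precisely when the enclosure holds on all of $\mathbb{R}^n_+$.
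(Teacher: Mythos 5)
Your proposal is correct and follows exactly the route the paper intends: the paper's own ``proof'' is the single sentence that the result follows from Theorems \ref{the:the1} and \ref{thm:kraso_pos_aiz_delay}, and your argument is precisely that combination, with the additional (and genuinely needed) monotonicity step $v^\top A_i x \le v^\top \overline A_i x$ for $v,x\ge 0$ spelled out to justify why the single vertex certificate $H$ dominates the whole interval family. Your closing caveat that the conclusion is really local to $\Gamma$ (despite the theorem's wording ``globally'') is also consistent with the paper's own note deferring the region-of-attraction analysis.
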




The proof follows from the proofs of Theorems \ref{the:the1} and \ref{thm:kraso_pos_aiz_delay}.

\paragraph*{Note} Due to space limitations, the region of attraction analysis regarding the $\Gamma-$set is not included here. A detailed treatment of this aspect can be found in our earlier work \cite{hedesh2025local}.

\begin{figure}[t]
  \centering
  \includegraphics[width=.9\columnwidth]{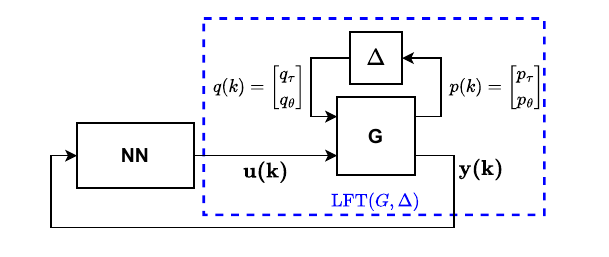}%
  \caption{Linear fractional transformation (LFT) model of the NN-feedback-loop with delay and interval uncetainty.\vspace{-.3cm}}
  \label{fig:LFT}
\end{figure}
\subsection{IQC-Based Method}\label{sec:delay-interval-IQC}
In this subsection we adapt a state-of-the-art IQC framework for NN-in-the-loop verification \cite{yin2021stability} and tailor it to our risk model with delays and interval uncertainty to enable a fair comparison with our positivity-based approach. To align with \cite{yin2021stability} and to model delays in a rational state space filter, we work in discrete time and detail the necessary adaptations step by step. For completeness, the extension to continuous time is shortly outlined in a subsequent remark.


We expose the two uncertainty mechanisms—discrete delays and entrywise interval parametric uncertainty—through a single LFT interconnection as shown in \ref{fig:LFT}. Denote by
\begin{equation}\small
p=\begin{bmatrix}p_{\tau}\\ p_{\theta}\end{bmatrix},\qquad
q=\begin{bmatrix}q_{\tau}\\ q_{\theta}\end{bmatrix},
\end{equation}
the stacked uncertainty port signals for delay ($\tau$) and interval ($\theta$), respectively. The plant $G$ is written in the standard form
{\small\begin{align}\label{eq:plant-LFT}
x_{k+1} &= A_G\,x_k + B_{G1}\,q_k + B_{G2}\,u_k,\nonumber\\
p_k     &= C_G\,x_k + D_{G1}\,q_k + D_{G2}\,u_k,\\
u_k     &= \Phi(x_k) \quad \text{(optional controller).}\nonumber
\end{align}}

The input-output relations of each uncertainty class are represented by an IQC. 
Specifically, the IQC is defined through a filter $\Psi$ acting on the input $p$ and output $q$, together with a constraint on the resulting signal $r$. 
The filter $\Psi$ is described by the LTI system
\vspace{-.4cm}

{\small\begin{align}\label{eq:IQCfiltergeneral}
\psi(k+1) &= A_{\Psi}\psi(k) + B_{\Psi 1}p(k) + B_{\Psi 2}q(k), \nonumber \\
r(k) &= C_{\Psi}\psi(k) + D_{\Psi 1}p(k) + D_{\Psi 2}q(k), \\
\psi(0) &= 0, \nonumber
\end{align}}

where $\psi(k)\in\mathbb{R}^{n_\psi}$ is the filter state, $r(k)\in\mathbb{R}^{n_r}$ is the output, and $A_{\Psi}$ is Schur. All matrices are of compatible dimensions.
A bounded, causal operator $\Delta$ is said to satisfy the time-domain IQC associated with $(\Psi,M)$ if, for all signals $p$ and $q=\Delta(p)$ and for all $N$, the prescribed inequality holds.
\begin{equation}\small
\sum_{k=0}^N r(k)^{\top} M r(k) \geq 0.
\end{equation}

We now specify all blocks in three steps. We define the blocks of delay, interval uncertainty, and their combination in turn.

\subsection*{a) Delayed system LFT and IQC}
Suppose there are $\ell$ delay channels with integer delays $d_i$. For each $i=1,\dots,\ell$,
\vspace{-.2cm}

{\small\[
p_{\tau}^{(i)}(k)=\begin{bmatrix}x_k\\ u_k\end{bmatrix}\in\mathbb{R}^{n+m},\quad
q_{\tau}^{(i)}(k)= p_{\tau}^{(i)}(k-d_i) = \begin{bmatrix}x_{k-d_i}\\ u_{k-d_i}\end{bmatrix}
\]}
Stack
{\small\[
p_{\tau}=\operatorname{col}\!\big(p_{\tau}^{(1)},\dots,p_{\tau}^{(\ell)}\big)\in\mathbb{R}^{\ell(n+m)},\quad
q_{\tau}=\operatorname{col}\!\big(q_{\tau}^{(1)},\dots,q_{\tau}^{(\ell)}\big).
\]}
Write the nominal delayed plant without uncertainty as
\begin{equation}\label{eq:nominal-delayed}\small
x_{k+1} \;=\; A_0^c x_k \;+\; \sum_{i=1}^{\ell} A_i^c\,x_{k-d_i} \;+\; \sum_{i=1}^{\ell} B_i\,u_{k-d_i},
\end{equation}
where the superscript $^c$ in $A_0^c$ and $A_1^c$ denote the nominal (center) system without uncertainty. Place all nominal delayed terms into $B_{G1,\tau}$ and generate $p_{\tau}$ from $(x,u)$ via
\[
A_{G,\tau}=A_0^c,\quad
B_{G1,\tau}=\big[\,A_1^c\ \ B_1\ \ \cdots\ \ A_{\ell}^c\ \ B_{\ell}\big],\quad
B_{G2,\tau}=0,\]
\begin{equation}\label{eq:AG-BG-CG-DG-delay}\small
C_{\tau}=\mathbf{1}_{\ell}\otimes\begin{bmatrix}I_n\\ 0_{m\times n}\end{bmatrix},\quad D_{G1,\tau}=0, \quad
D_{G2,\tau}=\mathbf{1}_{\ell}\otimes\begin{bmatrix}0_{n\times m}\\ I_m\end{bmatrix},
\end{equation}
so that $p_{\tau} = C_{\tau}x + D_{G2,\tau}u = \mathbf{1}_{\ell}\otimes\begin{bmatrix}x_k&u_k\end{bmatrix}^\top$. Here $\mathbf{1}_{\ell}$ is the $\ell$-vector of ones and $\otimes$ denotes the Kronecker product.

\subsection*{IQC for the delay}
For each delay channel of size $n_\tau=n+m$, use the $d_i$-step delay-line filter
\begin{align}\label{eq:delay-filter}
\psi_{k+1}^{(i)} &= (J_{d_i}\!\otimes I_{n_\tau})\,\psi_k^{(i)} + (e_1\!\otimes I_{n_\tau})\,p_{\tau}^{(i)}(k),\nonumber\\
r_{\tau}^{(i)}(k) &= (e_{d_i}^{\top}\!\otimes (-I_{n_\tau}))\,\psi_k^{(i)} + I_{n_\tau}\,q_{\tau}^{(i)}(k),
\end{align}
with $J_{d_i}$ the down-shift matrix, $e_j$ the $j$-th standard basis vector, and $\psi^{(i)}\in\mathbb{R}^{d_i n_\tau}$. The hard time-domain IQC is
\begin{equation}\label{eq:delay-IQC}\small
\sum_{k=0}^{N} \big(r_{\tau}(k)\big)^{\!\top} M_{\tau}\, r_{\tau}(k)\ \ge 0,\qquad
M_{\tau} = I,\qquad \forall N\ge0,
\end{equation}
where $r_{\tau}=\operatorname{col}\big(r_{\tau}^{(1)},\dots,r_{\tau}^{(\ell)}\big)$ with $r_{\tau}^{(i)} = q_{\tau}^{(i)}(k) - p_{\tau}^{(i)}(k-~d_i) \equiv 0$ satsifying exactly the equality in condition \eqref{eq:delay-IQC}.

\subsection*{b) Interval Uncertainty LFT and IQC}
We expose the interval uncertainty as a separate port $(p_{\theta},q_{\theta})$ by moving the uncertainty part to an additive term $E\,q_{\theta,k}$ as:
\begin{equation}\label{eq:uncertain-delayed}\small
x_{k+1} \;=\; A_0^c x_k + \sum_{i=1}^{\ell} A_i^c\,x_{k-d_i} + \sum_{i=1}^{\ell} B_i\,u_{k-d_i} + E\,q_{\theta,k}.
\end{equation}
Rewrite $\big[\underline A_0,\overline A_0\big]$ and $\big[\underline A_1,\overline A_1\big]$ as affine/interval structure
{\small\begin{equation*}
A_0(\theta) = A_0^c + \sum_{j=1}^{s_0} \theta_{0j}\,\hat A_{0j},\quad
A_i(\theta) = A_i^c + \sum_{j=1}^{s_i} \theta_{ij}\,\hat A_{ij},\;i=1,..,\ell,
\end{equation*}}

with $s_0,s_i$ being the number of independent scalar uncertainty channels to parametrize the uncertainty in 
$A_0^c,A_i^c$, and the independent scalars $\theta_{\cdot j}$ constrained in a box specified by an IQC as shown later.
Collect all basis matrices in
\[
E := \big[\,\hat A_{0,1}\ \cdots\ \hat A_{0,s_0}\ \ \hat A_{1,1}\ \cdots\ \hat A_{\ell,s_{\ell}}\,\big]\in\mathbb{R}^{n\times n s_{\rm tot}},
\]
where $s_{\rm tot}:=s_0+\cdots+s_{\ell}$, and form the signal that repeats the appropriate state samples:
\begin{equation}\label{eq:z-theta}\scriptsize
p_{\theta,k} = \operatorname{col}\Big(\underbrace{x_k,\dots,x_k}_{s_0\ \text{times}},\ \underbrace{x_{k-d_1},\dots,x_{k-d_1}}_{s_1},\ \dots,\ \underbrace{x_{k-d_{\ell}},\dots,x_{k-d_{\ell}}}_{s_{\ell}}\Big).
\end{equation}
Let $\Theta_k=\operatorname{diag}(\theta_{0,1}I_n,\dots,\theta_{\ell,s_{\ell}}I_n)$ and define
\[
q_{\theta,k} = \Theta_k p_{\theta,k},\qquad\text{so that}\qquad x_{k+1} \gets x_{k+1} + E\,q_{\theta,k}.
\]

Thus, following \eqref{eq:plant-LFT}, the updated blocks from \eqref{eq:AG-BG-CG-DG-delay} that generate $p=\operatorname{col}(p_{\tau},p_{\theta})$ and implement $q=\operatorname{col}(q_{\tau},q_{\theta})$ are
\[A_G=A_0^c,\quad
B_{G1}=\big[\,A_1^c\ \ B_1\ \ \cdots\ \ A_{\ell}^c\ \ B_{\ell}  \ \ E\big],\quad
B_{G2}=0,\]
\begin{equation}\label{eq:CG-DG-full}\small
C_G=\begin{bmatrix} C_{\tau}\\ C_{\theta}\end{bmatrix},\qquad
D_{G1}=\begin{bmatrix} 0\\ D_{\theta\leftarrow\tau}\end{bmatrix},\quad
D_{G2}=\begin{bmatrix} D_{G2,\tau}\\ 0\end{bmatrix},
\end{equation}
where $C_{\theta} =\begin{bmatrix}\mathbf{1}_{s_0}\otimes I_n\\ \mathbf{0}\end{bmatrix}$ stacks $x_k$ exactly $s_0$ times, and $D_{\theta\leftarrow\tau} = \mathbf{1}_{s_L}\otimes \begin{bmatrix}
     I_n & \mathbf{0}
\end{bmatrix}$ selects the delayed $x_{k-d_i}$ components from each $q_{\tau}^{(i)}$ and repeats them $s_L = s_1+\dots+s_\ell$ times, in the same order as in \eqref{eq:z-theta}.

\subsection*{IQC for the interval block}
We use a memoryless filter that outputs the pair $(p_{\theta},q_{\theta})$ directly, and a Symmetric Box multiplier.
\vspace{-.3cm}

{\small \begin{align}\label{eq:IQCfilterinterval}
&\psi(k+1) = \mathbf{0}, \nonumber \\
&r_\theta(k) = \mathbf{0}\psi(k) + \begin{bmatrix}
    \mathbf I \\ \mathbf 0
\end{bmatrix}p_\theta(k) + \begin{bmatrix}
    \mathbf 0\\\mathbf I
\end{bmatrix}q_\theta(k).
\end{align}}

Symmetric Box: $q_\theta=\theta \,p_\theta$ with $|\theta_j|\le \rho_j$ gives

{\small\[
M_{{\rm \theta},j}(\rho_j)=\begin{bmatrix}\rho_j^2 I & 0\\ 0 & -I\end{bmatrix}\succeq 0.
\]}

Stack with nonnegative multipliers $\mu_j\ge 0$:
\begin{equation}\label{eq:M-theta}
M_{\theta}(\mu)=\operatorname{blkdiag}\big(\mu_1 M_{\theta,1},\dots,\mu_{s_{\rm tot}} M_{\theta,s_{\rm tot}}\big)\succeq 0.
\end{equation}

\subsection*{c) Composite delay and interval LFT \& IQC.}
For constructing the combined IQC, let:
\begin{equation}\label{eq:combinedIQCs}\small
\Psi_{\Delta}=\operatorname{blkdiag}(\Psi_{\tau},\Psi_{\theta}) \qquad M_{\Delta}=\operatorname{blkdiag}(M_{\tau},M_{\theta}(\mu)).
\end{equation}
Define the extended state
\(
\zeta:=\begin{bmatrix}x\\ \psi\end{bmatrix}.
\)
The assembly gives the extended realization
\begin{subequations}\label{eq:extended-ABCD}
\begin{align}\small
\zeta_{k+1} &= A\,\zeta_k + B\,\begin{bmatrix}q_k\\ u_k\end{bmatrix},\\
r_k         &= C\,\zeta_k + D\,\begin{bmatrix}q_k\\ u_k\end{bmatrix},
\end{align}
\end{subequations}
{\small\begin{equation*}\label{eq:ABCD-blocks}
\begin{aligned}
A&=\begin{bmatrix} A_G & 0\\ B_{\Psi 1} C_G & A_{\Psi}\end{bmatrix},\quad
B=\begin{bmatrix} B_{G1} & B_{G2}\\ B_{\Psi 1} D_{G1}+B_{\Psi 2} & B_{\Psi 1} D_{G2}\end{bmatrix}\\
C&=\begin{bmatrix} D_{\Psi 1} C_G & C_{\Psi}\end{bmatrix},\quad
D=\begin{bmatrix} D_{\Psi 1} D_{G1}+D_{\Psi 2} & D_{\Psi 1} D_{G2}\end{bmatrix}
\end{aligned}
\end{equation*}}

where $(A_{\Psi},B_{\Psi},C_{\Psi},D_{\Psi})$ are those of $\Psi_{\Delta}$,
and the columns of $(B,C,D)$ are ordered consistently with $q=\operatorname{col}(q_{\tau},q_{\theta})$.


\begin{theorem}\label{thm:LyapIQC}
Consider the extended closed-loop system as in \eqref{eq:extended-ABCD}, and the composite IQC \eqref{eq:combinedIQCs}.
If there exist \(P\succ 0\), multipliers \(\mu_j\ge 0\), and some \(\varepsilon>0\) such that
\begin{equation}\label{eq:main-LMI}\small
\begin{bmatrix}
A^\top P A - P & A^\top P B\\
B^\top P A     & B^\top P B
\end{bmatrix}
\;+\;
\begin{bmatrix}C \\ D\end{bmatrix}
M_\Delta
\begin{bmatrix}C & D\end{bmatrix}
\ \preceq\ -\,\varepsilon I,
\end{equation}
then the origin is locally asymptotically stable for the given delays \(\{d_i\}\) and for all uncertainty realizations consistent with the interval box bounds. 
Moreover, every sufficiently small sublevel set \(\{\zeta:V(\zeta)\le\rho\}\), contained in the region 
where the IQCs are valid, is forward invariant.
\end{theorem}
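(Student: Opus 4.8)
The plan is to run the standard S-procedure / dissipativity argument with the quadratic storage function $V(\zeta)=\zeta^\top P\zeta$ on the extended state $\zeta=[x^\top\ \psi^\top]^\top$, using the hard IQCs to cancel the uncertainty ports. Since $P\succ0$, $V$ is positive definite. Writing the stacked exogenous signal as $w_k:=[q_k^\top\ u_k^\top]^\top$, the closed-loop recursion $\zeta_{k+1}=A\zeta_k+Bw_k$ from \eqref{eq:extended-ABCD} gives the one-step increment as the quadratic form
\[
V(\zeta_{k+1})-V(\zeta_k)=\begin{bmatrix}\zeta_k\\ w_k\end{bmatrix}^{\!\top}
\begin{bmatrix}A^\top P A-P & A^\top P B\\ B^\top P A & B^\top P B\end{bmatrix}
\begin{bmatrix}\zeta_k\\ w_k\end{bmatrix},
\]
while the filter output $r_k=C\zeta_k+Dw_k$ produces
\[
r_k^\top M_\Delta r_k=\begin{bmatrix}\zeta_k\\ w_k\end{bmatrix}^{\!\top}
\begin{bmatrix}C\\ D\end{bmatrix}M_\Delta\begin{bmatrix}C & D\end{bmatrix}
\begin{bmatrix}\zeta_k\\ w_k\end{bmatrix}.
\]

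Next I would take the congruence of the LMI \eqref{eq:main-LMI} with $[\zeta_k^\top\ w_k^\top]^\top$, i.e.\ left/right multiply the matrix inequality by this vector and its transpose. Summing the two displays above and invoking \eqref{eq:main-LMI} gives the pointwise dissipation inequality
\[
\big(V(\zeta_{k+1})-V(\zeta_k)\big)+r_k^\top M_\Delta r_k\ \le\ -\varepsilon\,\norm{\zeta_k}^2 .
\]
The key observation is that both blocks of $M_\Delta$ are \emph{hard} (time-domain) IQCs active at each step: the delay-line output obeys $r_\tau^{(i)}(k)=q_\tau^{(i)}(k)-p_\tau^{(i)}(k-d_i)\equiv0$, so its $M_\tau=I$ contribution vanishes identically, and the Symmetric-Box block yields $r_\theta^\top M_\theta(\mu)\,r_\theta=\sum_j\mu_j(\rho_j^2-\theta_j^2)\norm{p_{\theta,j}}^2\ge0$ whenever $\mu_j\ge0$ and $|\theta_j|\le\rho_j$. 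Hence $r_k^\top M_\Delta r_k\ge0$ at every $k$ and for every admissible interval realization, and the inequality collapses to $V(\zeta_{k+1})-V(\zeta_k)\le-\varepsilon\norm{\zeta_k}^2$. Summing from $0$ to $N$ and using $V\ge0$ bounds $\sum_k\norm{\zeta_k}^2$ and shows $V(\zeta_k)$ is nonincreasing, so $\zeta_k\to0$; this is the discrete-time Lyapunov conclusion of (uniform, realization-independent) asymptotic stability.

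The locality and forward-invariance claims require one extra ingredient, which is where the NN enters: the sector description of $\Phi$ that constrains the $u$-channel is valid only for $x_k\in\Gamma$, so the dissipation inequality is licensed only while the trajectory stays in the region where all IQCs hold. I would close this with a trapping/bootstrap argument. Because $\psi(0)=0$ and $\psi$ merely buffers delayed copies of $(x,u)$, every sufficiently small sublevel set $\{\zeta:V(\zeta)\le\rho\}$ projects into $\Gamma$. Since $V$ is nonincreasing along any trajectory for which the IQCs are valid, a trajectory initialized in $\{V\le\rho\}$ cannot escape it, hence remains in $\Gamma$, which retroactively justifies using the IQCs at every step; this self-consistency is exactly the asserted forward invariance.

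The main obstacle I anticipate is precisely this circularity: the decrease estimate presupposes IQC validity, while IQC validity presupposes confinement to $\Gamma$. I expect to resolve it by induction on $k$: assuming $\zeta_0,\dots,\zeta_k\in\{V\le\rho\}\subseteq\Gamma$ licenses the hard IQCs at step $k$, the decrease inequality then forces $V(\zeta_{k+1})\le V(\zeta_k)\le\rho$, so $\zeta_{k+1}$ remains in the sublevel set and the induction proceeds. Choosing $\rho$ small enough that $\{V\le\rho\}$ lies inside the common validity region establishes the base case and simultaneously delivers both the local asymptotic stability and the forward-invariance statement.
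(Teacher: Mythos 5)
Your proposal is correct and follows essentially the same route as the paper's proof: a congruence of the LMI \eqref{eq:main-LMI} with the stacked vector $[\zeta_k^\top\ q_k^\top\ u_k^\top]^\top$, pointwise nonnegativity of the hard IQC terms, a telescoping sum bounding $\sum_k\|\cdot\|^2$, and a sublevel-set trapping argument for the locality and forward-invariance claims. One detail where your write-up is actually sharper than the paper's: you justify $r_k^\top M_\Delta r_k\ge 0$ from the constraints $r_\tau\equiv 0$ and $|\theta_j|\le\rho_j$ (i.e., from the hardness of the IQCs along admissible trajectories), which is the correct argument, whereas the paper invokes ``$M_\theta(\mu)\succeq 0$,'' a statement that is false for the indefinite symmetric-box multiplier viewed as a matrix.
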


\begin{proof}
Let \(V(\zeta):=\zeta^\top P\zeta\). Left/right multiplying \eqref{eq:main-LMI} by
\(\big[\zeta^\top\ \ q^\top\ \ u^\top\big]\) gives, at each time \(k\),
{\small\[
V(\zeta_{k+1})-V(\zeta_k)
+ r_k^\top\!\big(M_\Delta\big) r_k
\le-\,\varepsilon\,\Big\|\!\begin{bmatrix}\zeta_k\\ q_k\\ u_k\end{bmatrix}\!\Big\|^2\!.
\]}

Summing from \(k=0\) to \(N\) yields
\vspace{-.5cm}

\begin{equation}\small
V(\zeta_{N+1})-V(\zeta_0)
+ \sum_{k=0}^{N}\!\Big(
r_k^\top(M_\Delta) r_k
\Big)
\le-\varepsilon \sum_{k=0}^{N}\!\Big\|\!\begin{bmatrix}\zeta_k\\ q_k\\ u_k\end{bmatrix}\!\Big\|^2.
\end{equation}

Since \(M_\tau,M_\theta(\mu)\succeq 0\) and \(V(.)\ge 0\), letting \(N\to\infty\) gives
\(\sum_{k=0}^{\infty}\!\big\|\![\zeta_k^\top\ q_k^\top\ u_k^\top]^\top\big\|^2 \le V(\zeta_0)/\varepsilon<\infty\),
hence \(\|[\zeta_k^\top\ q_k^\top\ u_k^\top]^\top\|\to 0\).
If \(V(\zeta)\le\rho\) ensures the satisfaction of the IQCs, then
\(V(\zeta_{k+1})\le V(\zeta_k)\) on the set, and for \(\rho\) small enough, the sublevel set is forward invariant.
\end{proof}

\begin{remark}
If an NN controller $u=\pi(x)$ is included, its local-sector IQC is added as an extra quadratic term on the right-hand side of \eqref{eq:main-LMI}. A local sector IQC for FFNNs is presented in \cite{yin2021stability}. We shortly explain it below and utilize it to analyze the stability of delayed-uncertain NN feedback loops.
\end{remark}

\paragraph*{Neural network IQCs}
Let $\nu_\phi$ and $\omega_\phi = \phi(\nu_\phi)$ be the stacked inputs/outputs of all
activations at the NN equilibrium $(\nu^\star,\omega^\star)$. If the scalar activation
$\varphi$ satisfies local sector bounds
$[\alpha^{(i)}_{\phi},\beta^{(i)}_{\phi}]$ on $\nu^{(i)}_{\phi}\in [\underline \nu^{(i)},\overline \nu^{(i)}]$
around $\nu^{(i)}_*$, as shown in \ref{fig:tanh-relaxations} and stacked as $(\alpha_\phi,\beta_\phi)$,
then for any $\lambda\in\mathbb{R}^{n_\phi}_+,\; (n_\phi=n_1+n_2+\dots+n_q$ summation of NN layer sizes),
\begin{equation}\small
\begin{bmatrix}
\nu_\phi - \nu^\star \\[2pt] \omega_\phi - \omega^\star
\end{bmatrix}^{\!\top}
\Psi_\phi^\top M_\phi(\lambda)\,\Psi_\phi
\begin{bmatrix}
\nu_\phi - \nu^\star \\[2pt] \omega_\phi - \omega^\star
\end{bmatrix}\;\ge\;0,
\end{equation}
{\small\begin{equation*}
\Psi_\phi =\!
\begin{bmatrix}
\mathrm{diag}(\beta_\phi) & -I_{n_\phi}\\[2pt]
-\mathrm{diag}(\alpha_\phi) & I_{n_\phi}
\end{bmatrix},
\quad
M_\phi(\lambda) =\!
\begin{bmatrix}
0_{n_\phi} & \mathrm{diag}(\lambda)\\[2pt]
\mathrm{diag}(\lambda) & 0_{n_\phi}
\end{bmatrix}.
\end{equation*}}

These NN QCs are added to the composite IQC used for
delays/uncertainty by simply adding an extra term
$R_\phi^\top \Psi_\phi^\top M_\phi(\lambda)\Psi_\phi R_\phi$
to the LMI \eqref{eq:main-LMI}. Here, $R_\phi$ is the selection matrix defined by 
$R_\phi \begin{bmatrix}\zeta \\ q \\ u\end{bmatrix} 
= \begin{bmatrix} v_\phi - v^\star \\[2pt] w_\phi - w^\star \end{bmatrix}$, 
with detailed construction given in~\cite{yin2021stability}.

\paragraph*{Notes on dimensions}
(i) The delay filter has $\sum_{i=1}^{\ell} d_i (n+m)$ states. (ii) The dimension of the interval filter is $\sum_{i=1}^{s_{tot}} n^2$. This can be reduced by introducing a low-dimensional basis that specifies how the uncertainty perturbs $A_0,A_i$. However, such modeling couples the uncertainty channels and may exclude admissible elementwise variations. Thus, the elementwise channels adopted in this paper is the safest choice.
\paragraph*{Extension to continuous time}
Use a rational delay approximation, e.g., Padé and its IQC to obtain an augmented rational LTI model, and replace the discrete-time Lyapunov condition in \eqref{eq:main-LMI} with \(A^\top P + P A\); the remaining steps are the same with minimal adaptations.

\section{Example}
\paragraph*{\textbf{C1) Interval uncertainty only}}
To corroborate the positivity-based certificate, we choose an open-loop unstable interval system \eqref{eq:nodelay} with $\underline A_0 = \begin{bmatrix}
    -8 & 2 & 1\\3 & -10 & 2\\1 & 2 & -8
\end{bmatrix},\; \overline A_{0_{ij}} = \underline A_{0_{ij}}+0.5$, $\underline A_1 = 3\,\mathbf{1}\mathbf{1}^\top,\;\overline A_1 = 3.5\,\mathbf{1}\mathbf{1}^\top,\; B=\mathbf{1},\;C=\mathbf{1}^\top$. We close the loop with our FFNN (zero biases) controller $u=\phi(y)$ trained on trajectories of a PID $(P=-0.9,I=-0.1,D=-0.2)$ controlled system. Using \eqref{eq:localsectorbound}, the NN is calculated to lie in the sector $[-3,-2.44]$ over the $\Gamma-$set: $y\in[0,4.5]$. The sector bounds are shown in Fig. \ref{fig:nn-sector}.

\begin{figure}[t]
  \centering
  \includegraphics[width=0.6\columnwidth]{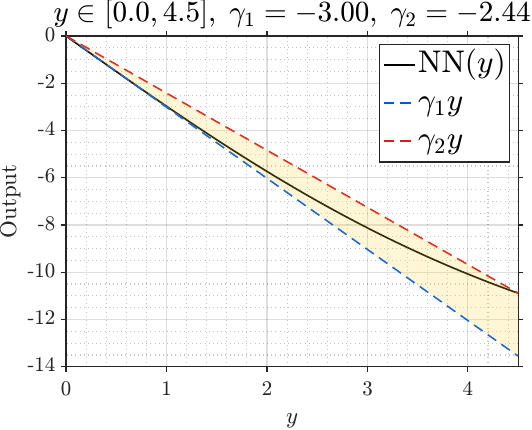}%
  \caption{Using \eqref{eq:localsectorbound}, the local sector bounds over the $\Gamma-$set: $y\in[0,4.5]$ are calculated as $\gamma_1=-3$ and $\gamma_2=-2.44$. As demonstrated $\gamma_1 y\le NN(y) \le \gamma_2 y$.}
  \label{fig:nn-sector}
\end{figure}

System simulation results are demonstrated in Fig.~\ref{fig:mc_interval_nn}. Consistent with Theorem \ref{the:the1}, $\underline{A}+B\gamma_1C$ is Metzler and $\overline A+B\gamma_2C$ is Hurwitz and all trajectories decay to the origin.
\begin{figure}[t]
  \centering
  \includegraphics[width=.8\columnwidth]{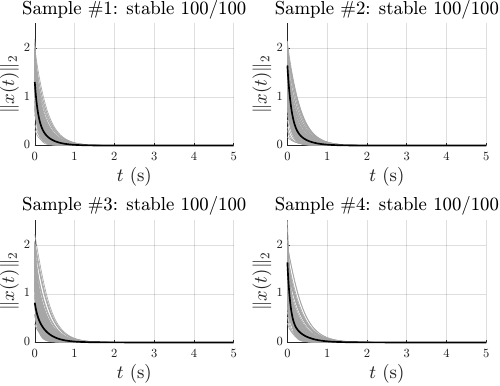}
  \caption{\footnotesize Delay-free interval case with a trained NN controller. Four random plants are sampled from $[\underline A_0,\overline A_0]$ and $[\underline A_1,\overline A_1]$. For each plant, $100$ random initial conditions consistant with $\Gamma-$set $(Cx_0\in[0,4.5])$ are simulated; thin lines show $\|x(t)\|_2$ and the bold line marks a representative (median) trajectory. Subplot titles report the proportion converging to zero.\vspace{-.4cm}}
  \label{fig:mc_interval_nn}
\end{figure}
\paragraph*{\textbf{C2) Delay only}}
We now consider the delayed Lur'e system \eqref{eq:lure_delay_TAC} with
$A_0=\underline{A}_0$ as above,
$A_1=3\,\mathbf{1}\mathbf{1}^\top$,
$B=\mathbf{1}$,
$C=\mathbf{1}^\top$,
and the same trained NN controller. With $A_0$ being Metzler, $B,\ C,\ A_1+B\gamma_1C\ge0$, and $A_0+A_1+B\gamma_2C$ being Hurwitz, Theorem \ref{thm:kraso_pos_aiz_delay} predicts exponential stability. Fig.~\ref{fig:mc_delay_nn} shows the simulation results are consistent with the expectation.
\begin{figure}[t]
  \centering
  \includegraphics[width=.8\columnwidth]{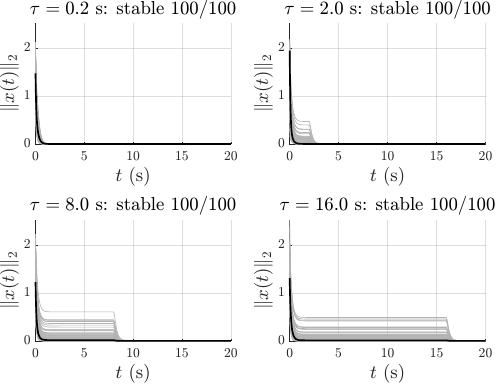}
  \caption{\footnotesize Delay-only case with the trained NN controller.
  Four tiles correspond to $\tau\in\{0.2,2,8,16\}$\,s.
  For each delay, $100$ random constant histories $x(t)\equiv x_0$ on $[-\tau,0]$ with $x_0\!\sim\!\mathrm{Unif}([-1.5,1.5]^3)$ are simulated.
  Thin curves plot $\|x(t)\|_2$; the bold curve marks a representative (median) trajectory.
  Subplot titles report the proportion converging to zero.\vspace{-.1cm}}
  \label{fig:mc_delay_nn}
\end{figure}

\paragraph*{\textbf{C3) Combined delay \& interval uncertainty}}
To illustrate the combined-delay/uncertainty case, we draw $5$ random plants by sampling $(A_0,A_1)$ uniformly from the interval box of the example C1 and $5$ random delays from $\tau\in[0.2,\,3.0]$\,s. For all sampled systems, the conditions of Theorem \ref{thm:kraso_pos_aiz_delayinterval} is satisfied as outlined in examples C1 and C2. As shown in Fig. \ref{fig:outputs_delay_interval}, For each sampled system we simulate from $10$ random constant histories $x_h(t)\!\equiv\!x_0$ inside $\Gamma-$set with the trained NN controller. We distinguish systems with different colors. Across the sampled cases the output trajectories contract toward zero, empirically supporting Theorem \ref{thm:kraso_pos_aiz_delayinterval}.

\paragraph*{\textbf{IQC comparison}}The IQC-based method is also run for two different discretization/delay configurations of the same interval system outlined in example C1, within the same $\Gamma-$set. As shown in Table \ref{tab:disc-delay-runtime}, the IQC-based method is delay/configuration dependent (cannot certify in cases) and performs up to $10^5$\textbf{X} slower. Further comparison notes are explained in Table \ref{tab:method-chars}.

\begin{figure}[t]
  \centering
  \includegraphics[width=.9\columnwidth]{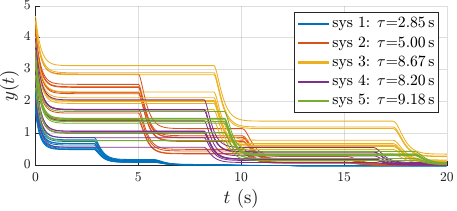}
  \caption{Outputs $y(t)=Cx(t)$ for five sampled $(A_0,A_1,\tau)$ systems; simulated ten positive initial conditions $(y_0\in[0,4.5])$ per system.\vspace{-.2cm}}
  \label{fig:outputs_delay_interval}
\end{figure}
\vspace{-.2cm}
\section{conclusion}
We present a positivity-based, scalable safety-verification framework for autonomous systems with NN controllers subject to two risk sources: time delay and interval-matrix uncertainty. A novel local sector bound for FFNNs underpins linear, delay-independent certificates. We benchmark against a state-of-the-art IQC-based method and, in simulations, achieve faster runtimes while certifying cases the benchmark cannot. Future work will extend the approach to additional risk models and to other NN architectures (e.g., RNNs).

\begin{table}[t]
  \centering
  \scriptsize
  \caption{effect of discretization \& delay on verification runtime.}
  \label{tab:disc-delay-runtime}
  \begin{tabular}{@{}llll@{}}
    \toprule
    Discretization step(s) & Delay(s) & IQC runtime(s) & Positivity runtime(s) \\
    \midrule
    $0.1$ & $0.7$ & \emph{cannot certify} & $5.3\times10^{-5}$ \\
    $0.01$ & $0.07$ & $2.22$  & $5.6\times10^{-5}$ \\
    \bottomrule
  \end{tabular}

\end{table}

\begin{table}[h]
  \centering
  \scriptsize
  \caption{Method characteristics.}
  \label{tab:method-chars}
  \begin{tabular}{@{}lcc@{}}
    \toprule
    & Positivity-based & IQC-based \\
    \midrule
    Certificate & Metzler/Hurwitz & Lyapunov + IQCs\\
    Solve class      & Linear constraints & SDP \\
    variables & Few (structure-exploiting) & grows with filters/NN/delays \\
    Dimension & $\,n\,$ & $ n(1{+}d_i)+n_\phi+\sum_{i=1}^{s_{tot}} n^2$ \\
    \bottomrule
  \end{tabular}
\end{table}

\begin{spacing}{.78}
\bibliography{Reference}
\end{spacing}
\end{document}